\newtheorem{theorem}{Theorem}
\newtheorem{proposition}[theorem]{Proposition}
\newtheorem{remark}[theorem]{Remark}
\numberwithin{equation}{section} 
\newenvironment{proof}[1][Proof]{\textbf{#1.} }{\ \rule{0.5em}{0.5em}}
\def\Q{{\mathbb Q}}        
\def\R{{\mathbb R}}        
\def\P{{\mathbb P}}        
\def\E{{\mathbb E}}        
\def\1{{\mathbf 1}}        
\def\L{{\mathcal L} \,}
\def\setT{{\mathcal T}}
\def\Vk1{{V^{(i-1)}}}
\def\Hk1{{H^{(i-1)}}}
\def\pk1{{p^{(i-1)}}}
\def\vs{{\varsigma}}
\def\VV{{\mathcal V}}
\def\JJ{{\mathcal J}}
\def\UU{{\mathcal U}}
\def\KK{{\mathcal K}}
\def\PP{{\mathcal P}}
\def\AA{{\mathcal A}}
\def\BB{{\mathcal B}}
\begin{document}
\title{Speculative Futures Trading under Mean Reversion\thanks{The authors would like to thank Sebastian Jaimungal and Peng Liu for their helpful remarks, as well as the  participants of the Columbia-JAFEE Conference 2015, especially Jiro Akahori, Junichi Imai,  Yuri Imamura, Hiroshi Ishijima, Keita Owari, Yuji Yamada, Ciamac Moallemi,   Marcel Nutz, and Philip Protter. }}
\author{Tim Leung\thanks{\mbox{Corresponding author.}} \thanks{IEOR Department, Columbia University, New York, NY 10027; email:\,\mbox{leung@ieor.columbia.edu}.} \and Jiao Li\thanks{APAM Department, Columbia University, New York, NY 10027; email:\,\mbox{jl4170@columbia.edu}.}  \and Xin Li\thanks{IEOR Department, Columbia University, New York, NY 10027; email:\,\mbox{xl2206@columbia.edu}.}  \and Zheng Wang\thanks{IEOR Department, Columbia University, New York, NY 10027; email:\,\mbox{zw2192@columbia.edu}.}}
\date{November 25, 2015}
\maketitle
\begin{abstract}
This paper studies the problem of trading futures with transaction costs when the underlying spot  price  is mean-reverting. Specifically, we  model the spot dynamics by the Ornstein-Uhlenbeck (OU),  Cox-Ingersoll-Ross (CIR), or  exponential Ornstein-Uhlenbeck (XOU) model.  The futures term structure is derived and its connection to futures price dynamics is examined.   For  each futures contract, we describe the  evolution of the roll yield, and  compute explicitly the expected roll yield.  For the futures trading problem, we incorporate     the investor's  timing option to enter or exit the market, as well as a chooser option to long or short a futures upon entry. This leads us to formulate and  solve the corresponding   optimal double stopping problems to determine the optimal trading strategies. Numerical results are presented to illustrate the optimal entry and exit boundaries under different models. We find  that the option to choose between a long or short position induces the investor to  delay market entry, as compared to the case where the investor pre-commits to go either long or short.
\end{abstract}
\vspace{10pt}
\noindent {\textbf{Keywords:}\, optimal   stopping, mean reversion, futures trading, roll yield, variational inequality } \\
\noindent {\textbf{JEL Classification:}\, C41, G11, G13}\\
\noindent {\textbf{Mathematics Subject Classification (2010):}\, 60G40, 62L15, 91G20,  91G80}\\

 \newpage
\section{Introduction\label{sect-intro}}
Futures are an integral part of the universe of derivatives. In 2014, the  total number of futures and options contracts traded on exchanges worldwide rose 1.5\% to 21.87 billion from 21.55 billion in 2013, with futures contracts alone accounting for 12.17 billion of these contracts. The CME group and Intercontinental Exchange are the two largest futures and options exchanges. The 2014 combined trading volume of  CME group with its subsidiary exchanges, Chicago Mercantile Exchange, Chicago Board of Trade and  New York Mercantile Exchange  was  3.44 billion  contracts,  while Intercontinental Exchange had a   volume of 2.28 billion contracts.\footnote{Statistics taken from \cite{acworth2015}.  }

A futures is a contract that requires the  buyer to purchase (seller to sell) a fixed quantity of an asset, such as a commodity, at a fixed price to be paid for on a pre-specified future date. Commonly traded on exchanges, there are futures written  on various underlying assets or references, including  commodities, interest rates,  equity indices, and volatility indices. Many futures stipulate physical delivery of the underlying asset, with notable examples of	 agricultural, energy, and metal futures. However, some, like the VIX futures, are settled in cash.

Futures are often  used as a hedging instrument, but they are also  popular among speculative investors. In fact, they are seldom traded with the intention of holding it to maturity as less than 1\% of futures traded ever reach physical delivery.\footnote{See  p.615 of \cite{mpt2009} for a discussion.}  This motivates  the question of optimal timing to trade a futures.

In this paper, we investigate  the speculative trading of futures under mean-reverting spot price dynamics.  Mean reversion is commonly observed for the spot price in many futures markets,  ranging from commodities and interest rates  to currencies and volatility indices, as studied in many  empirical studies (see, among others,  \cite{bessembinder1995}, \cite{scottfuturesmeanreversion1996}, \cite{schwartz1997stochastic}, \cite{Casassus2005}, \cite{gemanoilmeanreversion2007}, \cite{Balimeanreversion2008},    \mbox{\cite{wangVIX2011}}). For volatility  futures as an example,   \cite{Grunbichler1996985}  and  \cite{futures_zhang} model the S\&P500 volatility index (VIX) by  the  Cox-Ingersoll-Ross (CIR) process and provide a formula for the futures price.   We start by  deriving the price functions and  dynamics of the futures under the Ornstein-Uhlenbeck (OU), CIR, and exponential OU (XOU) models. Futures prices are computed under the risk-neutral measure, but its evolution over time is described by the historical measure. Thus, the investor's optimal timing to trade depends on both measures.

Moreover, we incorporate the investor's timing option to enter and subsequently exit the market. Before entering the market, the investor faces two possible strategies: long or short a futures first, then  close the position later.  In the first strategy, an investor is expected to establish the long position when the price is sufficiently  low, and then  exits when the price is high. The opposite is expected  for the second strategy. In both cases, the presence of transaction costs expands the waiting region, indicating the investor's desire for better prices. In addition, the waiting region expands drastically near expiry since transaction costs discourage entry when futures is very close to maturity. Finally, the main feature of our trading problem approach is to combine these two related problems and analyze the optimal strategy when an investor has the freedom to choose between either a \emph{long-short} or a \emph{short-long} position. Among our results, we find that when the investor has the right to choose, she delays market entry to wait for better prices compared to the individual standalone problems.

Our model is a variation of the theoretical arbitrage model proposed by \cite{futuresDaiKwok}, who also  incorporate the timing options to enter and exit the market, as well as the choice between opposite positions upon entry. Their sole underlying traded process is the stochastic \emph{basis} representing the difference between the index and futures values, which is modeled by a Brownian bridge. In an earlier study,  \cite{futuresBS}   formulate a similar optimal stopping problem for trading futures where the underlying basis is a Brownian bridge.  In comparison to these two models,  we model directly the spot price process, which allows for  calibration of futures prices and provide a no-arbitrage link between the (risk-neutral) pricing  and (historical) trading problems, as opposed to a priori assuming the existence of arbitrage opportunities, and  modeling the basis that is neither  calibrated nor shown to be consistent with the futures curves.  A similar timing strategy  for pairs trading has been studied by \cite{HFTbook} as an extension of the buy-low-sell-high strategy used in \cite{LeungLi2014OU}.

In addition, we  study the distribution and dynamics of \emph{roll yield},  an important concept in futures trading.  Following the literature and industry practice, we  define roll yield as the difference between changes in futures price and changes in the underlying price (see e.g.  \cite{moskowitz2012time}, \cite{gorton2013fundamentals}). For traders,  roll yield is a useful gauge for deciding to invest  in   the spot asset or associated  futures. In essence, roll yield defined herein represents the net cost and/or benefit of owning futures over the spot asset. Therefore, even  for an investor who trades futures only, the corresponding roll yield is a useful reference and can affect  her trading decisions. 

The rest of the paper is structured as follows.   Section \ref{sect-futuresprices} summarizes the   futures prices and term structures under mean reversion.  We discuss the concept of roll yield   in Section \ref{sect-rollyield}. In Section \ref{sect-futtrading}, we formulate and numerically solve the  optimal double stopping problems for   futures trading.  Our numerical algorithm is described in the Appendix.

\section{Futures Prices and Term Structures}\label{sect-futuresprices}
Throughout this paper, we consider  futures that are written on an asset whose price process is mean-reverting. In this section, we discuss the pricing of futures\index{futures}  and their term structures under different spot models.

 \subsection{OU and CIR Spot Models}\label{sect-OUCIRspot}
 We begin with two mean-reverting models for the spot price $S$, namely, the OU and CIR models.  As we will see, they yield the same price function for the futures contract. To start, suppose that   the spot price evolves according to the OU model:
\begin{align*}
dS_t = {\mu}( {\theta}-S_t)dt+\sigma dB_t,
\end{align*}
where ${\mu},\sigma>0$ are the speed of mean reversion and volatility of the process respectively. ${\theta}\in\R$ is the long run mean and $B$ is a standard Brownian motion under the historical measure $\P$. 

To price futures, we assume a re-parametrized OU model for  the risk-neutral spot price dynamics. Hence, under the risk-neutral measure $\Q$, the spot price follows
\begin{align*}
dS_{t}=  \tilde{\mu}(\tilde{\theta} - S_t)\,dt+\sigma \,dB_{t}^{\Q},
\end{align*} with  constant parameters $\tilde{\mu}, \sigma>0$, and $\tilde{\theta} \in \R$. This is again an OU process, albeit with a different long-run mean $\tilde{\theta}$ and speed of mean reversion $\tilde{\mu}$ under the risk-neutral measure. This involves a  change of measure that connects the two Brownian motions, as described by 
\begin{align*}
dB_{t}^{\Q} = dB_{t} +  \frac{\mu(\theta - S_t) - \tilde{\mu}(\tilde{\theta} - S_t)}{\sigma}dt .
\end{align*}

Throughout,  futures  prices are computed the same as  forward prices, and  we do not distinguish between the two prices (see  \citet{CIR1981,futuresBS}).  As such, the price of a futures contract    with maturity $T$ is given by
\begin{align}\label{fTOU}
f^T_t \equiv f(t, S_t;T) := \E^{\Q}\{S_T|S_t\} = (S_t-\tilde{\theta})e^{-\tilde{\mu}(T-t)}+\tilde{\theta}, \quad t\le T. 
\end{align}  Note that the futures price is a deterministic function of time and the current spot price.

We now  consider the CIR model for  the spot price:
\begin{align}\label{VIXCIRP}
dS_t ={\mu}({\theta}-S_t)dt+\sigma\sqrt{S_t} d{B}_t,
\end{align}
where ${\mu}, {\theta},\sigma>0$, and ${B}$ is a standard Brownian motion under the historical measure $\P$. Under the risk-neutral measure $\Q$, 
\begin{align}\label{CIRQ}
dS_t = \tilde{\mu}(\tilde{\theta}-S_t)dt + \sigma\sqrt{S_t} dB_t^{\Q},
\end{align}
where $\tilde{\mu}, \tilde{\theta} >0$, and $B^{\Q}$ is a  $\Q$-standard Brownian motion. In both SDEs,  \eqref{VIXCIRP}  and \eqref{CIRQ}, we require   
$2\mu \theta \ge \sigma^2$ and $2\tilde{\mu}\tilde{\theta} \ge \sigma^2$ (Feller condition) so that the CIR process stays positive. 

The two Brownian motions are related by 
\begin{align*}
dB_t^{\Q}=d{B}_t+\frac{{\mu}({\theta}-S_t)-\tilde{\mu}(\tilde{\theta}-S_t)}{\sigma\sqrt{S_t}}  \,dt,
\end{align*}
  which preserves the CIR model, up to different parameter values across two measures.  

The CIR terminal spot price $S_T$ admits the non-central Chi-squared distribution and is positive, whereas the OU spot price is normally distributed.  Nevertheless, the futures price under the CIR model  admits the same functional form as in the OU case (see \eqref{fTOU}):
\begin{align}\label{fTCIR}
f^T_t  = (S_t-\tilde{\theta})e^{-\tilde{\mu}(T-t)} +\tilde{\theta}, \quad t\le T.
\end{align}
 
\begin{proposition}
Under the OU or CIR spot model, the futures curve is (i) upward-sloping and concave if the  current spot price $S_0<\tilde{\theta}$, (ii) downward-slopping and convex if $S_0 > \tilde{\theta}$. 
\end{proposition}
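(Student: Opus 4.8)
The plan is to treat the futures curve as a cross-sectional snapshot: fix the calendar time at $t=0$ and regard the futures price from \eqref{fTOU}--\eqref{fTCIR} as a function of the maturity $T$ alone. Concretely, I would set
\begin{align*}
g(T) := f^T_0 = (S_0-\tilde{\theta})e^{-\tilde{\mu}T}+\tilde{\theta}, \qquad T\ge 0,
\end{align*}
and characterize its monotonicity and convexity via its first two derivatives in $T$. Because both the OU price \eqref{fTOU} and the CIR price \eqref{fTCIR} share exactly this functional form, verifying the claim for $g$ settles both spot models at once.

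Next I would differentiate with respect to $T$:
\begin{align*}
g'(T) = -\tilde{\mu}(S_0-\tilde{\theta})e^{-\tilde{\mu}T}, \qquad g''(T) = \tilde{\mu}^2(S_0-\tilde{\theta})e^{-\tilde{\mu}T}.
\end{align*}
Since $\tilde{\mu}>0$ and $e^{-\tilde{\mu}T}>0$ for every $T\ge 0$, the signs of $g'$ and $g''$ are dictated entirely by the sign of $S_0-\tilde{\theta}$. In case (i), where $S_0<\tilde{\theta}$, I read off $g'(T)>0$ and $g''(T)<0$, so the curve is upward-sloping and concave; in case (ii), where $S_0>\tilde{\theta}$, the signs reverse to give $g'(T)<0$ and $g''(T)>0$, i.e.\ downward-sloping and convex. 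This exhausts the two cases and yields the proposition.

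The computation itself is elementary, so there is no genuine analytic obstacle; the only point demanding care is conceptual. One must differentiate in the maturity $T$ while holding the calendar time (and hence the observed spot $S_0$) fixed, rather than differentiating $f^T_t$ along the time trajectory $t$. Conflating these two would confuse the term-structure statement claimed here with the distinct question of how a single contract's price evolves over time. Once the correct variable of differentiation is pinned down, the conclusion follows immediately from the two displayed derivatives.
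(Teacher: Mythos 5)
Your proposal is correct and follows essentially the same route as the paper: differentiating $f^T_0=(S_0-\tilde{\theta})e^{-\tilde{\mu}T}+\tilde{\theta}$ twice in the maturity $T$ and reading off the signs of both derivatives from the sign of $S_0-\tilde{\theta}$, using $\tilde{\mu}>0$. Your added remark about fixing calendar time and differentiating in $T$ rather than $t$ is a sensible clarification but does not change the argument.
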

\begin{proof} We differentiate  \eqref{fTCIR} with respect to  $T$ to get the derivatives:
\begin{align*}
\frac{\partial f^T_0}{\partial T} = -\tilde{\mu} (S_0-\tilde{\theta})e^{-\tilde{\mu}T} \lessgtr 0 \quad \text{ and }  \quad 
\frac{\partial^2 f^T_0}{\partial T^2} = \tilde{\mu}^2(S_0-\tilde{\theta})e^{-\tilde{\mu}T}  \gtrless 0,
\end{align*}   for $S_0 \gtrless \tilde{\theta}.$ Hence, we conclude.
\end{proof}

\begin{remark} The futures price formula \eqref{fTCIR} holds more generally for other mean-reverting models with risk-neutral spot dynamics of the form: 
 \begin{align*}
dS_t = \tilde{\mu}(\tilde{\theta}-S_t)dt + \sigma(S_t) dB_t^{\Q},
\end{align*}where $\sigma(\cdot)$ is a deterministic function such that $\E^\Q\{\int_0^T\sigma(S_t)^2dt\}<\infty$.    
\end{remark}

 Under the OU model, the futures satisfies the  following  SDE under the historical measure $\P$:
\begin{align}\label{FutOU}
df_t^T = \left[(f_t^T- \tilde{\theta})(\tilde{\mu} - \mu) +  \mu(\theta - \tilde{\theta})e^{-\tilde{\mu}(T-t)}\right]dt + \sigma e^{-\tilde{\mu}(T-t)} dB_t.
\end{align}
If the spot follows a CIR process, then the futures prices follows
\begin{align}\label{FutCIR}
df_t^T = \left[(f_t^T- \tilde{\theta})(\tilde{\mu} - \mu) +  \mu(\theta - \tilde{\theta})e^{-\tilde{\mu}(T-t)}\right]dt + \sigma e^{-\tilde{\mu}(T-t)}\sqrt{(f^T_t - \tilde{\theta})e^{\tilde{\mu}(T-t)} +\tilde{\theta}}dB_t.
\end{align}
Notice that the same drift appears in both \eqref{FutOU} and \eqref{FutCIR}.  Alternatively, we can express the drift in terms of the spot price as
\begin{align*}
e^{-\tilde{\mu}(T-t)}(\mu(\theta - S_t) - \tilde{\mu}(\tilde{\theta} - S_t)).
\end{align*}
This involves the difference between the mean-reverting drifts of the spot price   under the historical measure $\P$ and the risk-neutral measure $\Q$.  Therefore, the  drift of the futures price SDE is positive when the drift of the spot price under $\P$ is greater than that under $\Q$, i.e.
\begin{align*}
\mu(\theta - S_t) > \tilde{\mu}(\tilde{\theta} - S_t),
\end{align*}
and vice versa.

Now, consider an investor with a long position in a single futures contract, she wishes to close out the position and is interested in determining the best time to short. We consider the \emph{delayed liquidation premium}, which was introduced in \cite{LeungShirai} for equity options.  This premium  expresses the benefit of waiting to liquidate as compared to closing the position  immediately. Precisely, the delayed liquidation premium  is defined as 
\begin{align}\label{def_premium}
L(t,s) := \sup_{\tau \in\setT_{t,T}}{\E}_{t,s}\!\big\{e^{-r (\tau-t)}(f(\tau, S_{\tau};T)-c) \big\} - (f(t, s;T)-c),
\end{align}
where $\setT_{t,T}$  is the set of all stopping times, with respect to the filtration generated by $S$,  and $c$ is the transaction cost. As we can see in \eqref{def_premium}, the optimal stopping time for $L(t,s)$, denoted by $\tau^*$,   maximizes the expected discounted   value from liquidating the futures. 

\begin{proposition}\label{prop_ou/cir_premium}
Let $t \in [0, T]$ be the current time, and define the function 
\begin{align*}
G(u,s) := e^{-\tilde{\mu}(u-t)}(\mu(\theta - s) + (r - \tilde{\mu})(\tilde{\theta} - s))   + r(c - \tilde{\theta}).
\end{align*}Under the OU spot model, if $G(u, s)\ge 0$,
$\forall (u, s) \in [t, T] \times \R$, then it is optimal to hold the futures contract till expiry, namely, $\tau^* = T$ in \eqref{def_premium}. If
$G(u, s)< 0$, $\forall (u, s) \in [t, T] \times \R$, then it is optimal to liquidate immediately, namely, $\tau^* = t.$ The same holds under the CIR model with $G(u,s)$ defined over $ [t, T] \times \R_+$. 
\end{proposition}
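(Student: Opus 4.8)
The plan is to convert the optimal stopping problem \eqref{def_premium} into an integral representation by applying Itô's formula to the discounted gain process $M_u := e^{-r(u-t)}(f(u,S_u;T)-c)$, $u \in [t,T]$, and then to read off $\tau^*$ from the sign of the resulting integrand. Since every $\tau \in \setT_{t,T}$ is bounded by $T$, Itô's formula together with optional sampling yields, for each admissible $\tau$,
\begin{align*}
\E_{t,s}\big\{e^{-r(\tau-t)}(f(\tau,S_\tau;T)-c)\big\} - (f(t,s;T)-c) = \E_{t,s}\Big\{\int_t^\tau \!\big(\text{drift of }M_u\big)\,du\Big\},
\end{align*}
provided the stochastic-integral part of $M$ is a true martingale. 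Taking the supremum over $\tau$ gives $L(t,s) = \sup_{\tau\in\setT_{t,T}} \E_{t,s}\{\int_t^\tau (\text{drift of }M_u)\,du\}$, which reduces the problem to analysing one drift function.

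First I would compute the drift of $M_u$. Writing $f_u := f(u,S_u;T)$ and using $dM_u = e^{-r(u-t)}(-r(f_u-c)\,du + df_u)$ together with the futures dynamics \eqref{FutOU} — or, more conveniently, its spot-form drift $e^{-\tilde{\mu}(T-u)}(\mu(\theta-S_u)-\tilde{\mu}(\tilde{\theta}-S_u))$ — and the explicit price formula \eqref{fTCIR} for $f_u-c$, the $du$-coefficient of $M_u$ collapses to $e^{-r(u-t)}$ times
\begin{align*}
e^{-\tilde{\mu}(T-u)}\big(\mu(\theta-S_u)+(r-\tilde{\mu})(\tilde{\theta}-S_u)\big) + r(c-\tilde{\theta}).
\end{align*}
This coincides with $G(u,S_u)$ apart from carrying $e^{-\tilde{\mu}(T-u)}$ where $G$ carries $e^{-\tilde{\mu}(u-t)}$; but as $u$ ranges over $[t,T]$ both factors sweep the same interval $[e^{-\tilde{\mu}(T-t)},1]$, so the hypotheses that $G\ge 0$ (respectively $G<0$) on all of $[t,T]\times\R$ are exactly uniform sign statements about this integrand. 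Since $e^{-r(u-t)}>0$, the drift of $M_u$ therefore has the sign dictated by the hypothesis on $G$, uniformly in $(u,\omega)$.

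The conclusion then follows by a pointwise sign argument inside the integral, requiring no knowledge of the law of $S$. If $G\ge 0$ on the whole domain, the integrand is a.s.\ nonnegative for every $u$, so $\tau\mapsto \int_t^\tau (\cdots)\,du$ is a.s.\ nondecreasing and the supremum is attained at the largest admissible horizon, $\tau^*=T$. If instead $G<0$ everywhere, then for any $\tau$ exceeding $t$ with positive probability the integral is a.s.\ negative, so $\E_{t,s}\{\int_t^\tau(\cdots)du\}<0$, whereas $\tau=t$ gives value $0$; hence the supremum is attained at $\tau^*=t$ and $L(t,s)=0$. The CIR case is identical in form: the drift — and hence $G$ — coincides with the OU case (as already noted after \eqref{FutCIR}), only the state space of $s$ shrinks to $\R_+$, reflecting that the Feller condition keeps $S$ positive.

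The main obstacle, and the only place the two models genuinely differ, is justifying that the martingale part of $M$ has zero expectation at the bounded stopping time $\tau$, i.e.\ that $\int_t^\cdot e^{-r(u-t)}\,(\text{diffusion of }f_u)\,dB_u$ is a true martingale rather than a mere local martingale. Under OU the diffusion coefficient $\sigma e^{-\tilde{\mu}(T-u)}$ is deterministic and bounded on $[t,T]$, so this is immediate. Under CIR the coefficient is $\sigma e^{-\tilde{\mu}(T-u)}\sqrt{S_u}$, and I would control it via the uniform first-moment bound $\sup_{u\in[t,T]}\E_{t,s}\{S_u\}<\infty$ for the CIR process, which gives square-integrability of the integrand on $[t,T]$ and hence the martingale property. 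With that integrability in hand, optional sampling applies to every $\tau\le T$ and the representation above is rigorous.
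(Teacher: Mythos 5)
Your proof is correct and follows essentially the same route as the paper's: apply It\^o's formula to the discounted position value, reduce \eqref{def_premium} to the integral representation \eqref{ou/cir_premium}, and conclude from the uniform sign of the integrand, with your martingale-property check (deterministic bounded diffusion coefficient under OU, first-moment bound giving square-integrability under CIR) supplying rigor that the paper leaves implicit. Moreover, your computed factor $e^{-\tilde{\mu}(T-u)}$ is in fact the correct one --- the paper's $e^{-\tilde{\mu}(u-t)}$ in $G$ and in \eqref{ou/cir_premium} appears to be a typo --- and your observation that both factors sweep the same interval $[e^{-\tilde{\mu}(T-t)},1]$ as $u$ ranges over $[t,T]$ rightly shows that the proposition's uniform-sign hypotheses, and hence its conclusions, are unaffected.
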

\begin{proof}Applying Ito's formula to the   process of $e^{-rt}(f_t^T - c)$ and taking expectation, we can express  \eqref{def_premium} as 
\begin{align}\label{ou/cir_premium}
L(t,s) = \sup_{\tau \in\setT_{t,T}}{\E}_{t,s}\left\{\int_t^{\tau} e^{-r (u-t)}\left[ e^{-\tilde{\mu}(u-t)}(\mu(\theta - S_u) + (r - \tilde{\mu})(\tilde{\theta} - S_u))   + r(c - \tilde{\theta}) \right]du \right\}.
\end{align}
Therefore, if $G(u,s)$ (the integrand in \eqref{ou/cir_premium}) is positive, $\forall (u, s) \in [t, T] \times \R$, then the delayed liquidation premium can be maximized by choosing $\tau^* = T,$ which is the largest stopping time. Conversely, if $G<0$  $\forall (u, s) \in [t, T]\times \R$, then it is optimal to take $\tau^* = t$ in \eqref{ou/cir_premium}. Note that if $G=0$  $\forall (u, s) \in [t, T]\times\R$, then the delayed  liquidation premium is zero, and the investor is indifferent toward when to liqudiate.  \end{proof}

\subsection{Exponential OU Spot Model}\label{sect-XOUspot}
Under the exponential OU (XOU)  model, the spot price follows the SDE:
\begin{align}\label{XOUP}
dS_t = {\mu}({\theta}-\ln(S_t))S_tdt + \sigma S_t d{B}_t,
\end{align}
with positive parameters $(\mu, \theta, \sigma)$, and  standard Brownian motion $B$ under  the historical measure $\P$. For pricing futures, we assume that  the risk-neutral dynamics of $S$ satisfies 
\begin{align*}
dS_t =\tilde{\mu}(\tilde{\theta}-\ln(S_t))S_tdt + \sigma S_t dB_t^{\Q},
\end{align*}
where   $\tilde{\mu}, \tilde{\theta} >0$, and $B^{\Q}$  is a standard Brownian motion under the risk-neutral measure $\Q$.

For a  futures contract written on $S$ with maturity $T$, its price at time $t$ is given by
\begin{align}\label{fTXOU}
f^T_t  & = \exp\bigg(e^{-\tilde{\mu}(T-t)}\ln(S_t) + (1-e^{-\tilde{\mu}(T-t)})(\tilde{\theta}-\frac{\sigma^2}{2\tilde{\mu}})
 + \frac{\sigma^2}{4\tilde{\mu}}(1- e^{-2\tilde{\mu}(T-t)} )\bigg).
\end{align}
Consequently, the dynamics of the futures price under the historical measure $\P$ is given as
\begin{align}
df^{T}_t &= \left[ \left(\ln(f^T_t) + (e^{-\tilde{\mu}(T-t)}-1)(\tilde{\theta}-\frac{\sigma^2}{2\tilde{\mu}})  + \frac{\sigma^2}{4\tilde{\mu}}(e^{-2\tilde{\mu}(T-t)}-1) \right)(\tilde{\mu} - \mu)  \right. \notag\\ 
  &\quad \left. {}
+ e^{-\tilde{\mu}(T-t)}(\mu\theta - \tilde{\mu}\tilde{\theta})\right]f^{T}_tdt  + \sigma e^{-\tilde{\mu}(T-t)}f^{T}_tdB_t.\label{xou_fut_sde}
\end{align}
By rearranging the first term  in \eqref{xou_fut_sde}, the  drift of the futures price SDE  is positive iff
\begin{align*}
f_t^T &> \exp\left[\frac{e^{-\tilde{\mu}(T-t)}(\tilde{\mu}\tilde{\theta} - \mu\theta) }{\tilde{\mu} - \mu} - (e^{-\tilde{\mu}(T-t)}-1)(\tilde{\theta}-\frac{\sigma^2}{2\tilde{\mu}})  - \frac{\sigma^2}{4\tilde{\mu}}(e^{-2\tilde{\mu}(T-t)}-1)\right],
\end{align*}
or equivalently in terms of the spot price, 
\begin{align}
S_t > \exp\left(\frac{\tilde{\mu}\tilde{\theta} - \mu\theta}{\tilde{\mu} - \mu}\right). \label{Sgreater}
\end{align}
In particular, if  $\tilde{\theta} = \theta$, condition \eqref{Sgreater} reduces to $\log S_t > \theta$. Intuitively, since the  futures price must converge to the spot price at maturity,  the futures price tends to rise to approach the spot price when the spot price is  high, as observed in this condition.

We now consider  the delayed liquidation premium defined in \eqref{def_premium} but under the XOU spot model. Applying Ito's formula, we express the  optimal liquidation premium as
\begin{align}
L(t,s) = \sup_{\tau \in\setT_{t,T}}{\E}_{t,s}\left\{\int_t^{\tau} e^{-r (u-t)}\widetilde{G}(u, S_u)du  \right\}, \label{XOULts}
\end{align} 
where
\begin{align}
\widetilde{G}(u,s) := & \left\{r + \left[{\mu}({\theta}-\ln(s)) -  \tilde{\mu}(\tilde{\theta}-\ln(s))\right]e^{-\tilde{\mu}(u-t)}\right\}\notag\\
\times & \exp\bigg(e^{-\tilde{\mu}(u-t)}\ln(s) + (1-e^{-\tilde{\mu}(u-t)})(\tilde{\theta}-\frac{\sigma^2}{2\tilde{\mu}})
 + \frac{\sigma^2}{4\tilde{\mu}}(1- e^{-2\tilde{\mu}(u-t)} )\bigg)  - rc. \label{GXOU}
\end{align}
By inspecting the premium definition, we obtain the condition under which immediate liquidation or waiting till maturity is optimal. The proof is identical to that of Proposition \ref{prop_ou/cir_premium}, so we omit it.
\begin{proposition}
Let $t \in [0, T]$ be the current time.
Under the XOU spot model, if   $\widetilde{G}(u, s)\ge 0$  $\forall (u, s) \in [t, T] \times \R_+$, then holding till maturity ($\tau^* = T$) is optimal for  \eqref{XOULts}. If
$\widetilde{G}(u, s)<0$, $\forall (u, s) \in [t, T] \times \R_+$, then immediate liquidation ($\tau^* = t$) is optimal for \eqref{XOULts}.
 \end{proposition}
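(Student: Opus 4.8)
The plan is to reduce the optimal stopping problem to the integral representation \eqref{XOULts}, which is already asserted, and then settle optimality by a pathwise sign argument on $\widetilde{G}$. First I would start from the definition \eqref{def_premium} of $L(t,s)$ under the XOU dynamics and apply Ito's formula to the discounted futures process $e^{-r(u-t)}(f_u^T - c)$, using the futures SDE \eqref{xou_fut_sde}. Since the discounted value at $u=t$ equals $f(t,s;T)-c$, the telescoping identity $\E_{t,s}\{e^{-r(\tau-t)}(f_\tau^T-c)\}-(f(t,s;T)-c)=\E_{t,s}\{\int_t^\tau d[e^{-r(u-t)}(f_u^T-c)]\}$ converts \eqref{def_premium} into an integral of the drift plus a stochastic-integral term.

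The key algebraic step is to combine the $-r(f_u^T-c)$ contribution coming from differentiating the discount factor with the drift of \eqref{xou_fut_sde}, and to substitute the closed form \eqref{fTXOU} for $\ln f_u^T$ in terms of $\ln S_u$. When this substitution is made, the terms $(e^{-\tilde{\mu}(T-u)}-1)(\tilde{\theta}-\tfrac{\sigma^2}{2\tilde{\mu}})$ and the two $\tfrac{\sigma^2}{4\tilde{\mu}}$ pieces cancel pairwise, so the bracketed factor multiplying $(\tilde{\mu}-\mu)$ collapses to $e^{-\tilde{\mu}(T-u)}\ln S_u$, and the drift reduces to $e^{-\tilde{\mu}(T-u)}\bigl[\mu(\theta-\ln S_u)-\tilde{\mu}(\tilde{\theta}-\ln S_u)\bigr]f_u^T$. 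After combining with $-r(f_u^T-c)$ and factoring $f_u^T$ via \eqref{fTXOU}, the full integrand equals $e^{-r(u-t)}\widetilde{G}(u,S_u)$ exactly, reproducing \eqref{GXOU} and hence the representation \eqref{XOULts}.

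Next I would take $\E_{t,s}$ of this integral form and argue that the stochastic-integral term has zero expectation, i.e. that it is a genuine martingale and not merely a local one. This is where the XOU case requires slightly more care than the OU/CIR argument of Proposition \ref{prop_ou/cir_premium}: the diffusion coefficient $\sigma e^{-\tilde{\mu}(T-u)}f_u^T$ is an exponential of the OU log-price, so I would verify $\E_{t,s}\int_t^T \bigl(\sigma e^{-\tilde{\mu}(T-u)}f_u^T\bigr)^2\,du<\infty$ by observing that $\ln S_u$ is Gaussian with mean and variance bounded uniformly on the compact interval $[t,T]$, which makes every exponential moment of $f_u^T$ finite. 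I expect this integrability verification to be the main (and essentially the only) obstacle, since it replaces the direct normality/Feller arguments available in the OU and CIR settings.

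Finally comes the sign argument. Because $e^{-r(u-t)}>0$, the sign of the integrand coincides with that of $\widetilde{G}$. If $\widetilde{G}(u,s)\ge 0$ on $[t,T]\times\R_+$, then $\int_t^\tau e^{-r(u-t)}\widetilde{G}(u,S_u)\,du$ is nondecreasing in $\tau$ along every path, so the supremum over $\tau\in\setT_{t,T}$ is attained at the largest admissible stopping time $\tau^*=T$. If instead $\widetilde{G}<0$ everywhere, then every $\tau>t$ yields a strictly negative integral whereas $\tau=t$ yields $0$, so $\tau^*=t$ is optimal. This establishes both claims.
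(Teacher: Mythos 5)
Your proposal is correct and follows essentially the same route as the paper, which omits this proof as ``identical to that of Proposition \ref{prop_ou/cir_premium}'': apply Ito's formula to $e^{-r(u-t)}(f_u^T-c)$, reduce the premium to the integral representation \eqref{XOULts}, and conclude by the sign of the integrand, choosing $\tau^*=T$ (the largest admissible stopping time) when $\widetilde{G}\ge 0$ and $\tau^*=t$ when $\widetilde{G}<0$. Your two additions --- checking the exact cancellation via \eqref{fTXOU} that collapses the drift of \eqref{xou_fut_sde} to $e^{-\tilde{\mu}(T-u)}\bigl[\mu(\theta-\ln S_u)-\tilde{\mu}(\tilde{\theta}-\ln S_u)\bigr]f_u^T$, and verifying that the stochastic integral is a true martingale via the Gaussian law of $\ln S_u$ (so that every exponential moment of $f_u^T$ is finite on $[t,T]$) --- are sound refinements of details the paper leaves implicit, not a different method.
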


Next, we summarize the term structure of futures under the XOU spot model. 
\begin{proposition}
Under the XOU spot model, the futures curve is

\begin{enumerate}[(i)]
\item downward-sloping and convex if 
\[\ln S_0 >  \tilde{\theta} - \frac{\sigma ^2}{2 \tilde{\mu}}(1 - e^{-\tilde{\mu}T}) + \left(\frac{e^{2 \tilde{\mu}T}}{4} + \frac{\sigma ^2}{2 \tilde{\mu}}\right)^{\frac{1}{2}} - \frac{e^{\tilde{\mu}T}}{2},\]
\item downward-sloping and concave if 
\[  \tilde{\theta} - \frac{\sigma ^2}{2 \tilde{\mu}}(1 - e^{-\tilde{\mu}T}) < \ln S_0 < \tilde{\theta} - \frac{\sigma ^2}{2 \tilde{\mu}}(1 - e^{-\tilde{\mu}T}) + \left(\frac{e^{2 \tilde{\mu}T}}{4} + \frac{\sigma ^2}{2 \tilde{\mu}}\right)^{\frac{1}{2}} - \frac{e^{\tilde{\mu}T}}{2} ,\]
\item upward-sloping and concave if 
\[ \tilde{\theta} - \frac{\sigma ^2}{2 \tilde{\mu}}(1 - e^{-\tilde{\mu}T}) -\left(\frac{e^{2 \tilde{\mu}T}}{4} + \frac{\sigma ^2}{2 \tilde{\mu}}\right)^{\frac{1}{2}} - \frac{e^{\tilde{\mu}T}}{2}  <\ln  S_0 <   \tilde{\theta} - \frac{\sigma ^2}{2 \tilde{\mu}}(1 - e^{-\tilde{\mu}T}),\] 

and

\item upward-sloping and convex if \[  \ln S_0 <  \tilde{\theta} - \frac{\sigma ^2}{2 \tilde{\mu}}(1 - e^{-\tilde{\mu}T}) - \left(\frac{e^{2 \tilde{\mu}T}}{4} + \frac{\sigma ^2}{2 \tilde{\mu}}\right)^{\frac{1}{2}} - \frac{e^{\tilde{\mu}T}}{2}.\]
\end{enumerate}

\end{proposition}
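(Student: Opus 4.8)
The plan is to treat the futures curve as the composition $f^T_0 = e^{g(T)}$, where $g(T) := e^{-\tilde\mu T}\ln S_0 + (1-e^{-\tilde\mu T})(\tilde\theta - \tfrac{\sigma^2}{2\tilde\mu}) + \tfrac{\sigma^2}{4\tilde\mu}(1-e^{-2\tilde\mu T})$ is the exponent in \eqref{fTXOU} evaluated at the current time $t=0$. Since $\tfrac{\partial f^T_0}{\partial T} = g'(T)\,e^{g(T)}$ and $\tfrac{\partial^2 f^T_0}{\partial T^2} = \big(g''(T) + g'(T)^2\big)e^{g(T)}$, with $e^{g(T)}>0$, the slope of the curve carries the sign of $g'(T)$ and its convexity carries the sign of $g''(T) + g'(T)^2$. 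This reduces the whole proposition to analyzing two scalar functions of $T$ (with $\ln S_0$ entering as a parameter), exactly in the spirit of the OU/CIR term-structure proposition.

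For the slope, I would differentiate $g$ and collect terms, which yields the clean factorization $g'(T) = \tilde\mu e^{-\tilde\mu T}\big[\,\tilde\theta - \ln S_0 - \tfrac{\sigma^2}{2\tilde\mu}(1-e^{-\tilde\mu T})\,\big]$. Writing $\ell(T) := \tilde\theta - \tfrac{\sigma^2}{2\tilde\mu}(1-e^{-\tilde\mu T})$ for the bracketed threshold and using $\tilde\mu e^{-\tilde\mu T}>0$, the curve is downward-sloping precisely when $\ln S_0 > \ell(T)$ and upward-sloping when $\ln S_0 < \ell(T)$. This already isolates the common boundary $\ell(T)$ separating cases (i)--(ii) from (iii)--(iv).

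The convexity is the crux. A direct expansion of $g'' + (g')^2$ produces a quartic in $e^{-\tilde\mu T}$ that looks unwieldy; the key simplification is to regard it instead as a quadratic in $\ln S_0$. Concretely, I would introduce $B := \ell(T) - \ln S_0$, so that the slope term collapses to $g'(T) = \tilde\mu e^{-\tilde\mu T} B$ and $g''(T) = -\tilde\mu^2 e^{-\tilde\mu T}B - \tfrac{\sigma^2\tilde\mu}{2}e^{-2\tilde\mu T}$. Substituting gives $g'' + (g')^2 = \tilde\mu^2 e^{-2\tilde\mu T}B^2 - \tilde\mu^2 e^{-\tilde\mu T}B - \tfrac{\sigma^2\tilde\mu}{2}e^{-2\tilde\mu T}$, an upward-opening parabola in $B$. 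Applying the quadratic formula and pulling $\tfrac{1}{2}e^{\tilde\mu T}$ inside the square root, the two roots simplify to $B_\pm = \tfrac{e^{\tilde\mu T}}{2} \pm \big(\tfrac{e^{2\tilde\mu T}}{4} + \tfrac{\sigma^2}{2\tilde\mu}\big)^{1/2}$, which translate back through $\ln S_0 = \ell(T) - B$ into the two convexity thresholds $\ell(T) + \big(\tfrac{e^{2\tilde\mu T}}{4}+\tfrac{\sigma^2}{2\tilde\mu}\big)^{1/2} - \tfrac{e^{\tilde\mu T}}{2}$ and $\ell(T) - \big(\tfrac{e^{2\tilde\mu T}}{4}+\tfrac{\sigma^2}{2\tilde\mu}\big)^{1/2} - \tfrac{e^{\tilde\mu T}}{2}$ appearing in the statement. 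Because the parabola opens upward, the curve is convex when $\ln S_0$ lies outside these two thresholds and concave when it lies between them.

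Finally, I would intersect the slope dichotomy (split at $\ell(T)$) with the convexity trichotomy (split at the two thresholds above), using that the upper convexity threshold exceeds $\ell(T)$ while the lower one lies below it, so that each slope region meets the convex/concave regions in a single interval. Reading off the four resulting regions reproduces cases (i)--(iv) verbatim. The only genuine obstacle is the convexity step: the substitution $B = \ell(T) - \ln S_0$ is what converts the superficially quartic expression $g'' + (g')^2$ into an honest quadratic in $\ln S_0$, after which the quoted square-root thresholds emerge directly from the quadratic formula; the remainder is routine differentiation and bookkeeping.
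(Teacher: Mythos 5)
Your proposal is correct and follows essentially the same route as the paper: the bracketed factors in your $g'(T)$ and $g''(T)+g'(T)^2$ are exactly the coefficients of $f_0^T$ in the paper's displayed first and second derivatives, and your quadratic-in-$\ln S_0$ root-finding is precisely the sign analysis the paper leaves implicit in its closing sentence. Your write-up merely makes that final step explicit (via the substitution $B=\ell(T)-\ln S_0$ and the ordering $\ell(T)-B_+<\ell(T)<\ell(T)-B_-$), which is a faithful completion rather than a different method.
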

\begin{proof}
Direct differentiation of $f_0^T$ yields that
\begin{align*}
\frac{\partial f^T_0}{\partial T} = \left[ \tilde{\mu} (\tilde{\theta} - \frac{\sigma ^2}{2 \tilde{\mu}} - \ln S_0 ) e^{-\tilde{\mu}T} + \frac{\sigma ^2}{2} e^{-2 \tilde{\mu}T} \right] f_0 ^T,
\end{align*}
and
\begin{align*}
\begin{split}
\frac{\partial^2 f^T_0}{\partial T^2} &= \bigg[\tilde{\mu} ^2 e^{-2 \tilde{\mu}T} ( \tilde{\theta} -\frac{\sigma ^2}{2 \tilde{\mu}} - \ln S_0 )^2 + (\tilde{\mu} \sigma ^2  e^{-3 \tilde{\mu}T} - \tilde{\mu}^2  e^{- \tilde{\mu}T}) ( \tilde{\theta} - \frac{\sigma ^2}{2 \tilde{\mu}} -\ln S_0)\\
 &\quad +\frac{\sigma ^4}{4} e^{-4 \tilde{\mu}T} -\sigma ^2 \tilde{\mu}  e^{-2 \tilde{\mu}T}\bigg] f_0 ^T.
 \end{split}
\end{align*}
The results are obtained by analyzing the signs of the first and second order derivatives.
\end{proof}

\newpage

 \begin{figure}[h]
   \centering
\includegraphics[trim=5   1  7  1,clip,width=3in]{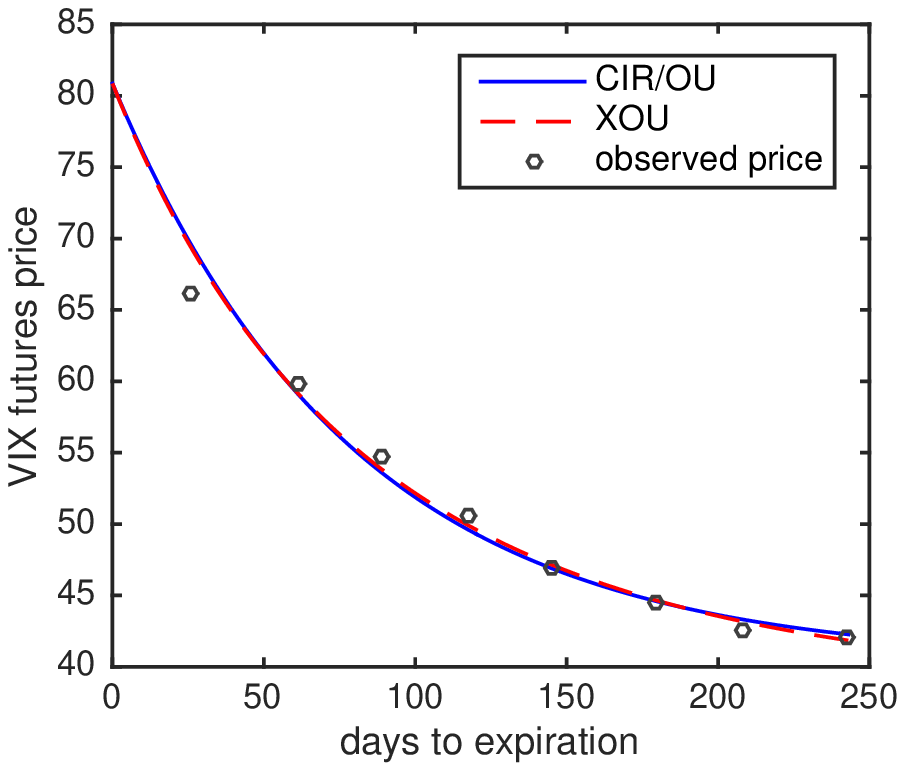}
\includegraphics[trim=5  1 7  1,clip,width=3in]{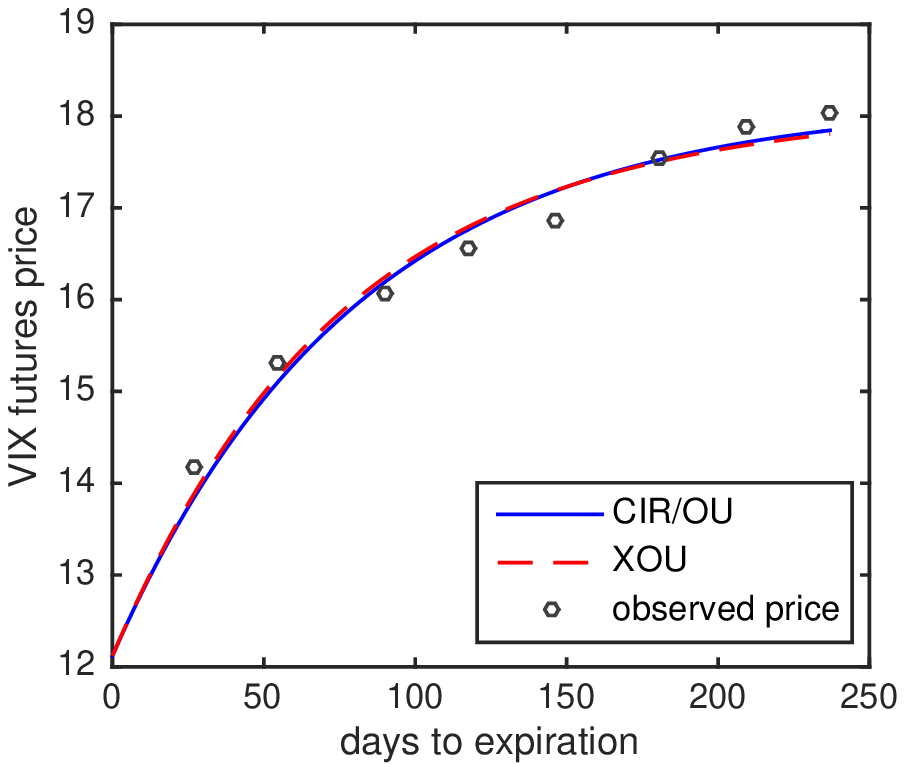}
   \caption{ (Left) VIX futures historical prices on Nov 20, 2008 with the current VIX value at  80.86. The days to expiration range from 26   to 243 days (Dec--Jul contracts). Calibrated parameters:  $\tilde{\mu} = 4.59, \tilde{\theta} = 40.36$ under the  CIR/OU model, or  $\tilde{\mu} = 3.25, \tilde{\theta} = 3.65, \sigma = 0.15$ under the XOU model. (Right) VIX futures historical prices on Jul 22, 2015 with the current VIX value at 12.12. The days to expiration ranges from 27 days to 237 days (Aug--Mar contracts). Calibrated parameters:  $\tilde{\mu} = 4.55, \tilde{\theta} = 18.16$ under the CIR/OU model, or   $\tilde{\mu} = 4.08, \tilde{\theta} = 3.06, \sigma = 1.63$ under the XOU model.\label{fig-futuresterm}}
   \end{figure}

 Figure \ref{fig-futuresterm} displays two characteristically different  term structures observed in the VIX futures market.   These  futures, written  on the  CBOE Volatility Index\index{volatility index} (VIX)  are traded on  the CBOE Futures Exchange.  As the VIX measures the  1-month implied volatility calculated  from the prices of S\&P 500 options, VIX futures  provide exposure to the market's volatility. We plot  the   VIX futures prices during the recent financial crisis on November 20, 2008 (left),  and on a post-crisis date, July 22, 2015 (right),  along with the calibrated  futures curves under the OU/CIR model and XOU model.  In the calibration, the model parameter values  are chosen to minimize the sum of squared errors between the model   and observed futures prices.  
 
 The OU/CIR/XOU model  generates a decreasing convex curve for November 20, 2008 (left), and an  increasing concave curve for July 22, 2015 (right), and they all fit the observed futures prices very well. The former  term structure  starts with a very high spot price of 80.86 with a calibrated risk-neutral long-run mean $\tilde{\theta}=40.36$ under the OU/CIR model, suggesting that the market's expectation of falling market volatility. In contrast, we infer from  the  term structure  on July 25, 2015  that the market expects the VIX to raise from the current spot value of 12.12 to be closer to $\tilde{\theta} = 18.16$.

\section{Roll Yield}\label{sect-rollyield}

By design,  the value of a futures contract converges to the spot price as time approaches maturity. If the futures market is in  \emph{backwardation},  the futures  price  increases to reach  the spot price at expiry. In contrast,     when the market is in contango\index{contango}, the futures   price tends to decrease  to the spot price. For an  investor  with a long futures position, the return is positive in a backwardation market, and negative in a contango market. An investor can long the front-month contract, then short it at or before expiry, and  simultaneously go long the next-month contract.   This   \emph{rolling strategy} that involves repeatedly rolling an expiring contract into a new one     is commonly adopted during backwardation, while its opposite is often used in  a contango market. Backwardation and contango phenomena are widely observed in the energy commodities and volatility futures markets.

More generally, both the futures and spot prices vary over time.  If the spot price increases/decreases, the futures  price will also end up higher/lower.  This leads us to consider the difference between the futures and spot returns, defined as the  change in   values without dividing by the initial value.\footnote{See \textit{Deconstructing Futures Returns: The Role of Roll Yield}, Campbell White Paper Series, February 2014.} Let $0\leq t_1 < t_2 \leq T$. We  denote the \textit{roll yield}\index{roll yield} over the period $[t_1, t_2]$ associated with a single futures contract with maturity $T$ by
\begin{align} \label{rollyielddef}
\mathcal{R}(t_1,t_2,T) := (f^T_{t_2}-f^T_{t_1})-(S_{t_2}-S_{t_1}).
\end{align}
In other words, roll yield here is the change in the futures price that is not accounted for by the change in spot price. It represents the  net benefits and/or costs of owning futures rather than the underlying asset itself. 

This notion of roll yield is the same as that in \cite{moskowitz2012time}  where the relationship between roll yield and futures returns is studied. \cite{gorton2013fundamentals} treat roll yield as the same as futures basis, which means  a negative roll yield signifies a market in contango and a positive roll yield is equivalent to backwardation. In our set-up, if one always hold a futures contract to maturity, then roll yield is the same as futures basis. Therefore, the definition of roll yield in \cite{gorton2013fundamentals} is a special case of ours. In particular, if $t_2=T$, then the roll yield reduces to the price difference $(S_{t_1}-f^T_{t_1})$. Furthermore, observe that if $S_{t_2} = S_{t_1}$ then roll yield becomes merely the change in futures price. 

A closely related concept is the S\&P-GSCI roll yield. S\&P-GSCI 
carries out rolling of the underlying futures contracts once each month, from the fifth to the ninth business day. On each day, 20\% of the current portfolio is rolled over, in a process commonly  known  as the \emph{Goldman roll}. The S\&P-GSCI roll yield for each commodity is   defined as the difference between the average purchasing price of the new futures contracts and the average selling price of the old futures contracts. In essence,  it is an indicator of the sign of the slope of the futures term structure. In comparison to the S\&P-GSCI index, our definition accounts for the changes of spot price over time.

Next, we examine the cumulative  roll yield across maturities. Denote by $T_1<T_2 <T_3<\dots$   the maturities of futures contracts.  We roll over at every $T_i$ by replacing the contract expiring at $T_i$ with a new contract that expires at $T_{i+1}$. Let $i(t):=\min\{i: T_{i-1}<t\leq T_{i}\}$, and $i(0)=1$. Then the roll yield   up to time  $t>T_1$ is
 \begin{align}
\mathcal{R}(0,t)&=(f^{T_{i(t)}}_t-f^{T_{i(t)}}_{T_{i(t)-1}})+\sum_{j=2}^{i(t)-1}(S_{T_j}-f^{T_j}_{T_{j-1}}) + (S_{T_1}-f^{T_1}_0) - (S_t-S_0)\notag\\
&= \underbrace{(f^{T_{i(t)}}_t - S_t) - (f^{T_1}_0-S_0)}_{\text{Basis Return}} + \underbrace{\sum_{j=1}^{i(t)-1}(S_{T_j}-f^{T_{j+1}}_{T_j})}_{\text{Cumulative Roll Adjustment}}.\label{rollyieldgeneral}
\end{align}
The cumulative roll adjustment is related to the term structure of futures contracts. If $T_{i}-T_{i-1}$ is constant, and the term structure only moves parallel, then the cumulative roll adjustment is simply the number of roll-over times a constant (difference between spot and near-month futures contract).

\subsection{OU and CIR Spot Models}
Suppose the spot price follows the OU or CIR model described in Section \ref{sect-OUCIRspot}. 
Inspecting \eqref{rollyieldgeneral}, we can write down the SDE for the roll yield  under the OU model:
\begin{align}\notag d\mathcal{R}(0,t) &= df^{T_{i(t)}}_t - dS_t\\
&= \left[e^{-\tilde{\mu}(T_{i(t)}-t)}\left(\mu(\theta - S_t) - \tilde{\mu}(\tilde{\theta} - S_t)\right) - \mu(\theta - S_t)\right]dt + \sigma\left(e^{-\tilde{\mu}(T_{i(t)}-t) }- 1\right) dB_t.\label{ryOU}
\end{align}
The roll yield SDE for under the CIR model has the same drift as \eqref{ryOU}. Furthermore,  the drift is positive iff
\begin{align*}
S_t > \frac{e^{-\tilde{\mu}(T_{i(t)}-t)}(\tilde{\mu}\tilde{\theta} - \mu \theta) + \mu \theta}{e^{-\tilde{\mu}(T_{i(t)}-t)}(\tilde{\mu}- \mu) + \mu}.
\end{align*}
In particular, if $\theta$ =  $\tilde{\theta}$, then the drift is positive iff $S_t > \theta.$ When $t = T_{i(t)},$ the drift is  $\tilde{\mu}(S_t - \tilde{\theta})$ and is positive iff $S_t > \tilde{\theta}.$ Furthermore,  the drift term can also be expressed as  
\begin{align*}
 \tilde{\mu}\left(f^{T_{i(t)}}_t - \tilde{\theta}\right) - \left(1 -  e^{-\tilde{\mu}(T_{i(t)}-t)}\right)\mu(\theta - S_t).
\end{align*}

On the other hand, we observe that
\begin{align*}
d\mathcal{R}(0,t)dS_t = \sigma^2\left(e^{-\tilde{\mu}(T_{i(t)}-t) }- 1\right) dt,
\end{align*}
under the OU case and 
\begin{align*}
d\mathcal{R}(0,t)dS_t = \sigma^2\left(e^{-\tilde{\mu}(T_{i(t)}-t) }- 1\right) S_t dt,
\end{align*}
under the CIR case. In other words, the instantaneous covariations betweeen roll yield and spot price under both OU and CIR models are negative for $t < T_{i(t)}$ regardless of the spot price level.

Consider a longer horizon with rolling at multiple maturities, the expected roll yield is
\begin{align*}
\E\{\mathcal{R}(0,t)\}&=\E\{f^{T_{i(t)}}_t - S_t\} - (f^{T_1}_0-S_0)+\sum_{j=1}^{i(t)-1}\E\{S_{T_j}-f^{T_{j+1}}_{T_j}\}\\
&= ((S_0-{\theta})e^{-{\mu}t}+{\theta}-\tilde{\theta})(e^{-\tilde{\mu}(T_{i(t)}-t)}-1)
-(S_0-\tilde{\theta})(e^{-\tilde{\mu} T_1}-1) \\
&\quad  +\sum_{j=1}^{i(t)-1}((S_0-{\theta})e^{-{\mu}T_j}+{\theta}-\tilde{\theta})(1-e^{-\tilde{\mu}(T_{j+1}-T_j)}).
\end{align*}

In summary, the expected roll yield depends not only on the risk-neutral parameters $\tilde{\mu}$ and $\tilde{\theta}$, but also their historical counterparts. It vanishes when $S_0 = \theta = \tilde{\theta}$. This is intuitive because if the current spot price is currently at the long-run mean, and the risk-neutral and historical measures coincide, then the spot and futures prices have  little tendency to deviate from the long-run mean. Also, notice that neither  the   futures price nor  the roll yield depends on the volatility parameter $\sigma$. This is true under the OU/CIR model, but not the exponential OU model, as we discuss next.

\subsection{Exponential OU Spot Model}
We now turn to  the exponential OU spot price model discussed in Section \ref{sect-XOUspot}. Recalling  the futures price in  \eqref{fTXOU},  the expected roll yield is given by
\begin{align}
&\E \{\mathcal{R}(0,t)\} = Y_1(t) + Y_2(t)  - (f^{T_1}_0-S_0), \label{xourollyield}
\end{align}
where 
\begin{align*}
Y_1(t) & = \E \{f^{T_{i(t)}}_t - S_t\} \\
&=\exp\bigg( e^{-\tilde{\mu}(T_{i(t)}-t)-{\mu}t}\ln(S_0)+ \bigg({\theta} -\frac{\sigma^2}{2{\mu}}\bigg)(1-e^{-{\mu}t})e^{-\tilde{\mu}(T_{i(t)}-t)}\\
&\quad +\frac{\sigma^2}{4{\mu}}e^{-2\tilde{\mu}(T_{i(t)}-t)}(1-e^{-2{\mu}t})+ (1-e^{-\tilde{\mu}(T_{i(t)}-t)})(\tilde{\theta}-\frac{\sigma^2}{2\tilde{\mu}}) \\
&\quad + \frac{\sigma^2}{4\tilde{\mu}}(1- e^{-2\tilde{\mu}(T_{i(t)}-t)} ) \bigg)\\
&\quad - \exp\left(e^{-{\mu}t}\ln(S_0)+(1-e^{-{\mu}t})({\theta}-\frac{\sigma^2}{2{\mu}})+\frac{\sigma^2}{4{\mu}}(1-e^{-{\mu}t}) \right),
\end{align*}
and 
\begin{align*}Y_2(t) &=\sum_{j=1}^{i(t)-1}\E \{S_{T_j}-f^{T_{j+1}}_{T_j}\} \\
&= \sum_{j=1}^{i(t)-1} \bigg( \exp\bigg(e^{-{\mu}T_j}\ln(S_0)+(1-e^{-{\mu}T_j})({\theta}-\frac{\sigma^2}{2{\mu}})+\frac{\sigma^2}{4{\mu}}(1-e^{-{\mu}T_j}) \bigg)\\
&\quad -\exp\bigg( e^{-\tilde{\mu}(T_{j+1}-T_j)-{\mu}T_j}\ln(S_0)+ \bigg({\theta} -\frac{\sigma^2}{2{\mu}}\bigg)(1-e^{-{\mu}T_j})e^{-\tilde{\mu}(T_{j+1}-T_j)} \\
&\quad +\frac{\sigma^2}{4{\mu}}e^{-2\tilde{\mu}(T_{j+1}-T_j)}(1-e^{-2{\mu}T_j})\\
&\quad  + (1-e^{-\tilde{\mu}(T_{j+1}-T_j)})(\tilde{\theta}-\frac{\sigma^2}{2\tilde{\mu}}) + \frac{\sigma^2}{4\tilde{\mu}}(1- e^{-2\tilde{\mu}(T_{j+1}-T_j)} ) \bigg)\bigg).
\end{align*}

 The explicit formula \eqref{xourollyield} for the  expected roll yield reveals the non-trivial dependence on the volatility parameter $\sigma$, as well as  the risk-neutral parameters $(\tilde{\mu},\tilde{\theta})$ and historical parameters $(\mu, \theta)$. It is useful for instantly  predicting the roll yield after calibrating the risk-neutral parameters   from the term structure of the futures prices, and estimating the historical parameters from past spot prices. 
 
Referring to \eqref{XOUP} and \eqref{xou_fut_sde}, the historical dynamics of the roll yield under an XOU spot model is given by
\begin{align*}
d\mathcal{R}(0,t) &= h(t,s)dt + \sigma\left(e^{-\tilde{\mu}(T_{i(t)}-t)}f^{T_{i(t)}}_t - S_t \right)dB_t,
\end{align*}
where  
\begin{align*}
h(t,s) &= \left(\ln s(\tilde{\mu} - \mu)  
+ (\mu\theta - \tilde{\mu}\tilde{\theta}) \right)\exp\bigg(e^{-\tilde{\mu}(T_{i(t)}-t)}\ln(s)  + (1-e^{-\tilde{\mu}(T_{i(t)}-t)})(\tilde{\theta}-\frac{\sigma^2}{2\tilde{\mu}}) \notag\\
&\qquad+ \frac{\sigma^2}{4\tilde{\mu}}(1- e^{-2\tilde{\mu}(T_{i(t)}-t)} )\bigg)e^{-\tilde{\mu}(T_{i(t)}-t)}- {\mu}({\theta}-\ln(s))s,
\end{align*} 
is the drift expressed in terms of the spot price $S_t.$ This reduces to 
\begin{align*}
h(T_{i(t)},s) = s\ln(s)(\tilde{\mu} - \mu + {\mu}{\theta}) - \tilde{\mu}\tilde{\theta}s, \quad \text{ if }~ t = T_{i(t)}.
\end{align*}
 Unlike the OU/CIR case, under an XOU spot model there is no explicit solution for the critical level of the spot price at which the drift changes sign.

As in the OU/CIR spot model, it is of interest to compute
 \begin{align*}
d\mathcal{R}(0,t)dS_t &= \sigma^2\left(e^{-\tilde{\mu}(T_{i(t)}-t)}f^{T_{i(t)}}_t - S_t \right)S_tdt,
\end{align*}
from which we see that the covariation between roll yield and spot price can be either positive or negative. In particular when the futures price is significantly higher than the spot price, i.e. when the market is in contango, the correlation tends to be positive.

\section{Optimal Timing to Trade Futures}\label{sect-futtrading}
In Section \ref{sect-futuresprices}, we have discussed the timing to liquidate a long  futures position, and the concept of rolling discussed in Section \ref{sect-rollyield} corresponds to holding the futures up to expiry. In this section, we further explore the timing options embedded in   futures, and develop the optimal trading strategies.

\subsection{Optimal Double Stopping Approach}
Let us consider the scenario in which  an investor has a long position in a futures contract with expiration date $T$. With a long position in the futures, the investor can hold it till maturity, but can also close the position  early by taking an opposite position at the prevailing  market price. At maturity, the two opposite positions cancel each other.   This motivates us to investigate the best time to close.

If the  investor selects to close the long position at time $\tau\le T$,  then she will receive the market value of the futures on the expiry date, denoted by  $f(\tau,S_\tau;T)$, minus the transaction cost $c \ge 0$.  To maximize the expected discounted value, evaluated under the investor's historical probability measure $\mathbb{P}$ with a constant subjective discount rate $r>0$,  the investor solves the optimal stopping problem 
 \begin{align*}
\VV(t, s) = \sup_{\tau \in\setT_{t,T}}{\E}_{t,s}\!\big\{e^{-r (\tau-t)}(f(\tau, S_{\tau};T) - c)\big\}, 
\end{align*}
where  $\setT_{t,T}$  is    the set of all stopping times, with respect to the filtration generated by $S$, taking values between $t$ and $\hat{T}$, where  $\hat{T} \in (0, T]$ is the trading deadline, which can equal but not exceed the futures' maturity.  Throughout this chapter, we continue to use the shorthand notation $\E_{t,s}\{\cdot\}\equiv\E\{\cdot|S_t=s\}$ to  indicate  the expectation taken under the historical probability measure $\mathbb{P}$.

The value function $\VV(t,s)$ represents the expected  liquidation
value associated with the  long futures position. Prior to taking the  long position in $f$, the investor, with zero position, can select the optimal timing to start the trade, or not to enter at all. This leads us to analyze the timing option  inherent in the trading
problem. Precisely, at time $t\le T$, the investor  faces the optimal entry timing problem  
\label{JJ1a}
\begin{align*}\JJ(t, s) =  \sup_{\nu \in \setT_{t, T}
}\E_{t,s}\!\left\{e^{-r (\nu-t)} (  \VV(\nu, S_{\nu})  - (f(\nu, S_{\nu};T) + \hat{c}))^{+}\right\},
\end{align*} 
where $\hat{c} \ge 0$ is the transaction cost, which may differ from $c$. In other words, the investor seeks
to maximize the expected difference between the value function
$\VV(\nu,S_\nu)$ associated with the long position and the prevailing   futures price $f(\nu, S_{\nu};T)$. The value function $\JJ(t,s)$
represents the maximum expected value of the  trading opportunity embedded in the futures.  We refer this ``long to open, short to close" strategy as the \emph{long-short} strategy\index{long-short strategy}.

Alternatively, an investor may well  choose to short a futures contract with the speculation that the futures price will fall, and then close it out later by establishing a long position.\footnote{By taking a short futures position, the  investor is required to  sell the underlying spot at maturity at a pre-specified price. In contrast to the short sale of a stock, a short futures does not involve share borrowing or re-purchasing.  } Given  an investor  who has  a unit  short position in the  futures contract, the objective is to minimize the expected discounted cost to close out this position at/before maturity.  The optimal timing strategy  is determined from 
\begin{align*}
\UU(t, s) =  \inf_{\tau \in \setT_{t,T}
}\E_{t,s}\!\left\{e^{-r (\tau-t)}(f(\tau, S_{\tau};T) + \hat{c})\right\}.
\end{align*} 
If the investor  begins with a zero position, then she can decide when to enter the market by solving
\begin{align*}
\KK(t, s) =  \sup_{\nu \in \setT_{t,T}
}\E_{t,s}\!\left\{e^{-r (\nu-t)} ((f(\nu, S_{\nu};T) - c) - \UU(\nu, S_{\nu}))^{+}\right\}.
\end{align*}
We call this  ``short to open, long  to close" strategy as the \emph{short-long} strategy\index{short-long strategy}. 

When an investor contemplates entering the market, she can either long or short first. Therefore, on top of the timing option, the investor has an additional choice between the long-short and short-long strategies. Hence, the investor solves the market entry timing problem:
 \begin{align}
\PP(t, s) =  \sup_{\vs \in \setT_{t,T}
}\E_{t,s}\!\left\{e^{-r (\vs-t)} {\max}\{\mathcal{A}(\vs, S_\vs), \mathcal{B}(\vs, S_\vs)\}\right\},\label{problemPP}
\end{align}
with two alternative rewards upon entry defined by
\begin{align*}
\mathcal{A}(\vs, S_{\vs}) &:= (\VV(\vs, S_{\vs})  - (f(\vs, S_{\vs};T) + \hat{c}))^{+} \quad \text{ (long-short)},\\
\mathcal{B}(\vs,S_{\vs}) &:= ((f(\vs, S_{\vs};T) - c) -\UU(\vs, S_{\vs}) )^{+} \quad \text {(short-long)}.\\
\end{align*}

\subsection{Variational Inequalities \& Optimal Trading Strategies}\label{sect-VI}
In order to solve for the optimal trading strategies, we study the  variational inequalities corresponding to the value functions $\JJ$, $\VV$, $\UU$, $\KK$ and $\PP$. To this end, we first define the operators:
 \begin{align}
 \L^{(1)}\{\cdot\}&:= -r \cdot + \frac{\partial\cdot}{\partial t} + \tilde{\mu}( \tilde{\theta} - s)\frac{\partial\cdot}{\partial s} + \frac{\sigma^2}{2}\frac{\partial^2 \cdot}{\partial s^2} \label{l1},\\
 \L^{(2)}\{\cdot\}&:= -r \cdot + \frac{\partial\cdot}{\partial t} + \tilde{\mu}( \tilde{\theta}  - s)\frac{\partial\cdot}{\partial s} + \frac{\sigma^2s}{2}\frac{\partial^2 \cdot}{\partial s^2}\label{l2},\\
 \L^{(3)}\{\cdot\}&:= -r \cdot + \frac{\partial\cdot}{\partial t} + \tilde{\mu}( \tilde{\theta}  - \ln s)\frac{\partial\cdot}{\partial s} + \frac{\sigma^2s^2	}{2}\frac{\partial^2 \cdot}{\partial s^2},\label{l3}
\end{align}
corresponding to,  respectively,  the OU, CIR, and XOU models. 

The optimal exit and entry problems  $\JJ$ and $\VV$ associated with the  \emph{long-short} strategy    are solved from  the following pair of variational inequalities:
\begin{align}
\textrm{max}\left\{\,\L^{(i)} \VV(t,s)\,, \,(f(t, s;T) -  c) - \VV(t,s)\,\right\} &= 0,\label{VIV}\\
\textrm{max}\left\{\,\L^{(i)} \JJ(t,s)\, ,\, (\VV(t,s)- (f(t, s;T)+\hat{c}))^{+} - \JJ(t,s) \, \right\} &=0, \label{VIJ}
\end{align}
for $(t,s) \in [0,T]\times  \mathbb{R}$, with $i\in\{1,2,3\}$ representing  the  OU, CIR, or XOU model respectively.\footnote{The spot price is positive, thus $s\in \mathbb{R}_+$,  under the CIR and XOU models.} 
Similarly, the reverse \textit{short-long} strategy can be determined by numerically solving the variational inequalities satisfied by $\UU$ and $\KK$:
\begin{align}
\textrm{min}\left\{\,\L^{(i)} \UU(t,s)\,,\,  (f(t, s;T) + \hat{c}) -\UU(t,s) \,\right\} &= 0,\label{VIU}\\
\textrm{max}\left\{\,\L^{(i)} \KK(t,s)\, , \,((f(t, s;T) - c) - \UU(t, s))^{+} - \KK(t,s) \, \right\} &=0. \label{VIK}
\end{align}
 As  $\VV$, $\JJ$, $\UU$, and $\KK$ are numerically solved, they become the input to the final problem represented by the value function $ \PP$. To determine the optimal timing to enter the futures market, we  solve the variational inequality 
 \begin{align}
\textrm{max}\left\{\,\L^{(i)} \PP(t,s)\,,\, \textrm{max}\{\mathcal{A}(t,s), \mathcal{B}(t,s)\} - \PP(t,s)\,  \right\} &=0. \label{VIP}
\end{align}

The optimal timing strategies are described by a series of boundaries representing the time-varying critical spot  price  at which the investor should establish a long/short futures position. In  the ``long to open, short to close" trading problem, where   the investor pre-commits to taking a long position first, the market entry timing is described by the ``$\JJ$'' boundary in Figure \ref{CIRVVJJ}. The subsequent timing to exit the market is represented by the ``$\VV$'' boundary in   Figure \ref{CIRVVJJ}. As we can see, the investor will long the futures when the spot  price is low, and short to close the position when the spot price is high, confirming the buy-low-sell-high intuition.

If the investor adopts the \textit{short-long} strategy, by which she will first short a futures and subsequently  close out with a long position, then  the optimal market entry  and exit timing strategies are  represented, respectively,  by the ``$\KK$'' and ``$\UU$'' boundaries in Figure \ref{CIRKKUU}. The investor will  enter the market by shorting a futures when the spot price is sufficiently high (at the ``$\KK$'' boundary), and  close it out when the spot price is low. Thus, the boundaries reflect a sell-high-buy-low strategy.

When there are no transaction costs (see Figure \ref{CIRVVJJ0} and \ref{CIRKKUU0}), the waiting region shrinks for both strategies. Practically, this means that the investor tends to enter and exit the market earlier, resulting in more rapid trades. This is intuitive as transaction costs discourage trades, especially near expiry. 

In the market entry problem represented by $\PP(t, s)$ in \eqref{problemPP}, the investor decides at what spot price to open a position. The corresponding timing strategy  is illustrated by two boundaries in Figure \ref{CIRoptimalbdy}. The boundary labeled as ``$\mathcal{P} = \AA$'' (resp. ``$\mathcal{P} = \BB$") indicates the critical spot price (as a function of time) at which the investor enters the market by taking a \emph{long} (resp. \textit{short}) futures position.  The area above the ``$\mathcal{P} = \BB$" boundary is the ``short-first" region, whereas the area below the ``$\mathcal{P} = \AA$'' boundary is the ``long-first" region. The area  between the two boundaries is the region where the investor should wait to enter. The ordering of the regions is intuitive -- the investor should long the futures when the spot price is currently low and short it when the spot price is high. As time approaches maturity, the value of entering the market diminishes. 
The investor will not start a long/short position unless the spot is very low/high close to maturity. Therefore, the waiting  region expands significantly near expiry.

\begin{figure}[h]
   \centering
\subfigure[]{ \includegraphics[trim=5   1  7  1,clip,width=2.3in]{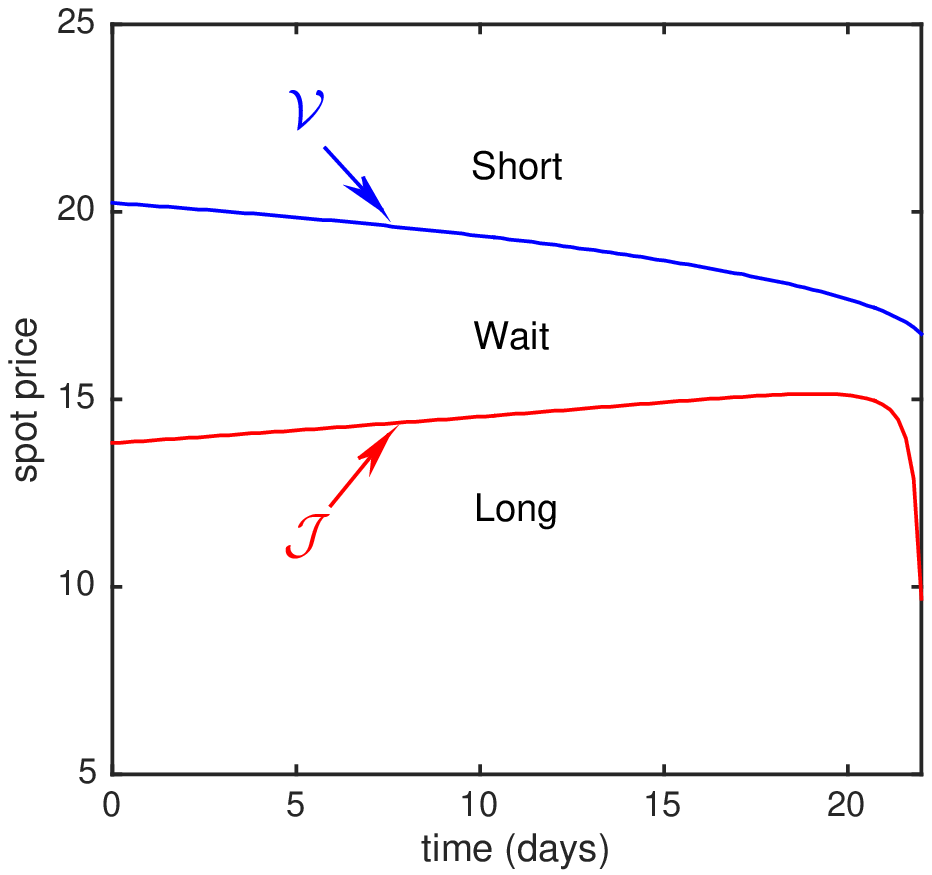}\label{CIRVVJJ}}
\subfigure[]{ \includegraphics[trim=5   1  7  1,clip,width=2.3in]{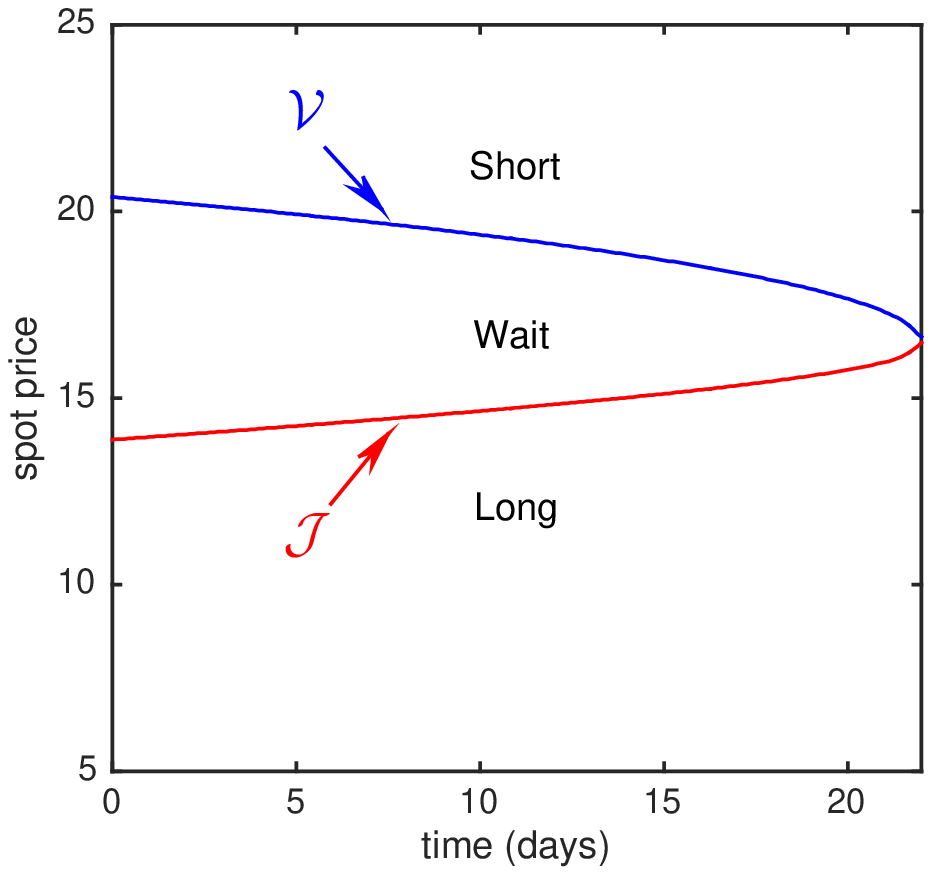}\label{CIRVVJJ0}}
\subfigure[]{ \includegraphics[trim=5  1 7  1,clip,width=2.3in]{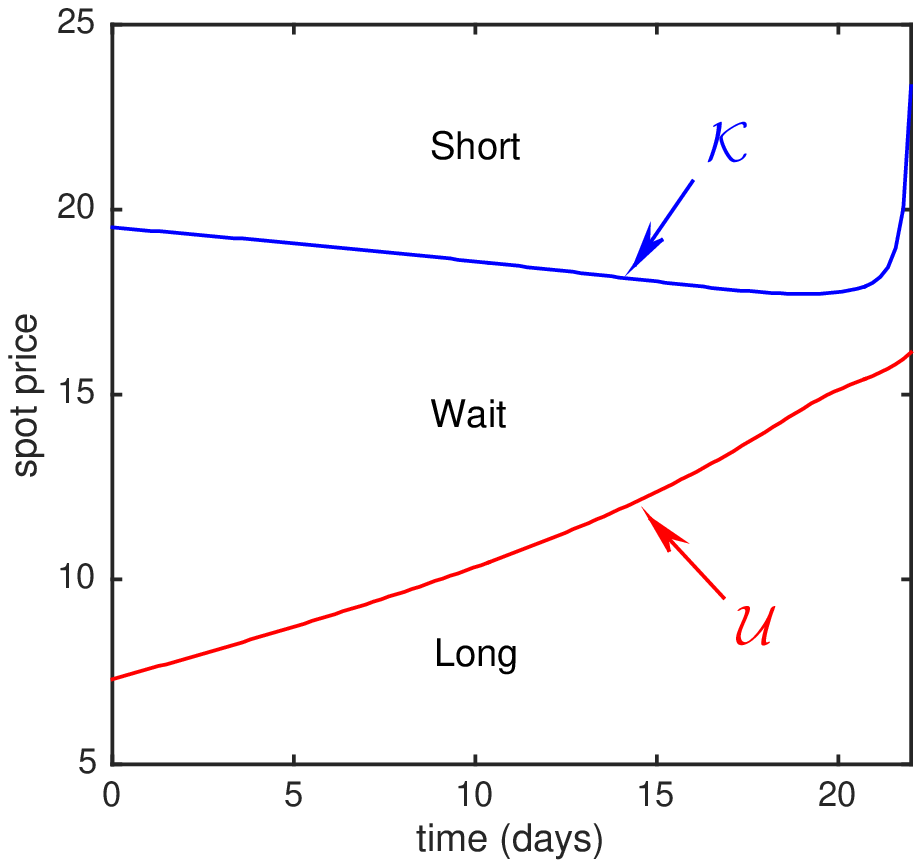}\label{CIRKKUU}}
\subfigure[]{ \includegraphics[trim=5  1 7  1,clip,width=2.3in]{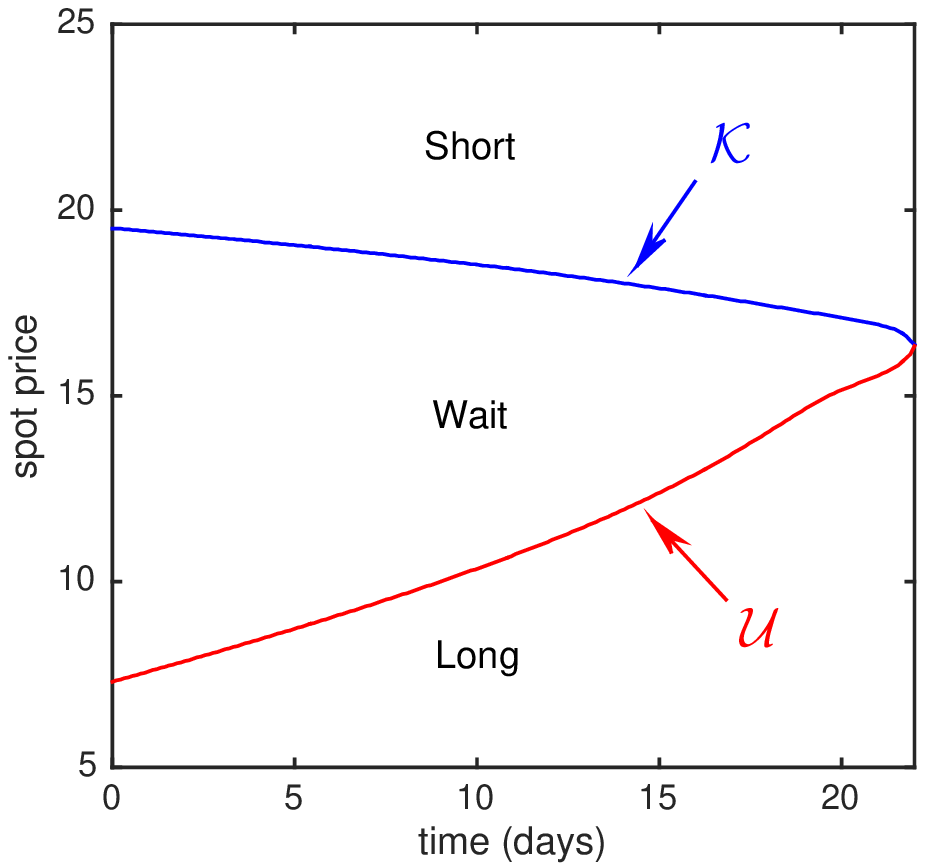}\label{CIRKKUU0}}
\subfigure[]{ \includegraphics[trim=5  1 7  1,clip,width=2.3in]{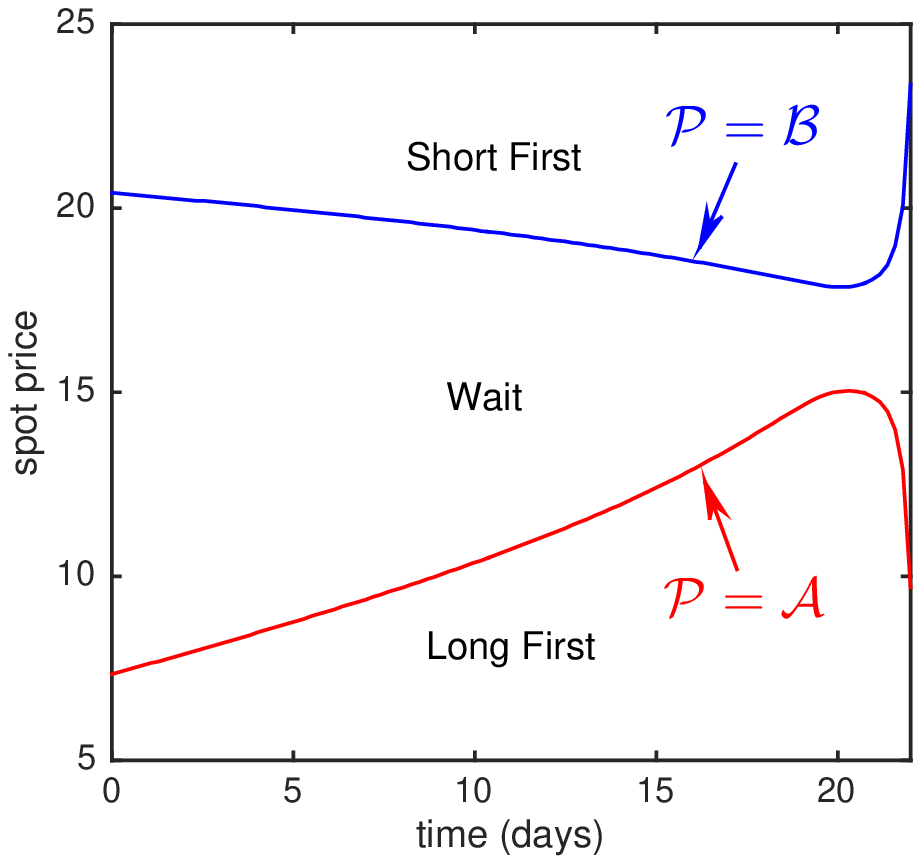}\label{CIRoptimalbdy}}
 \subfigure[]{ \includegraphics[trim=5  1 7  1,clip,width=2.3in]{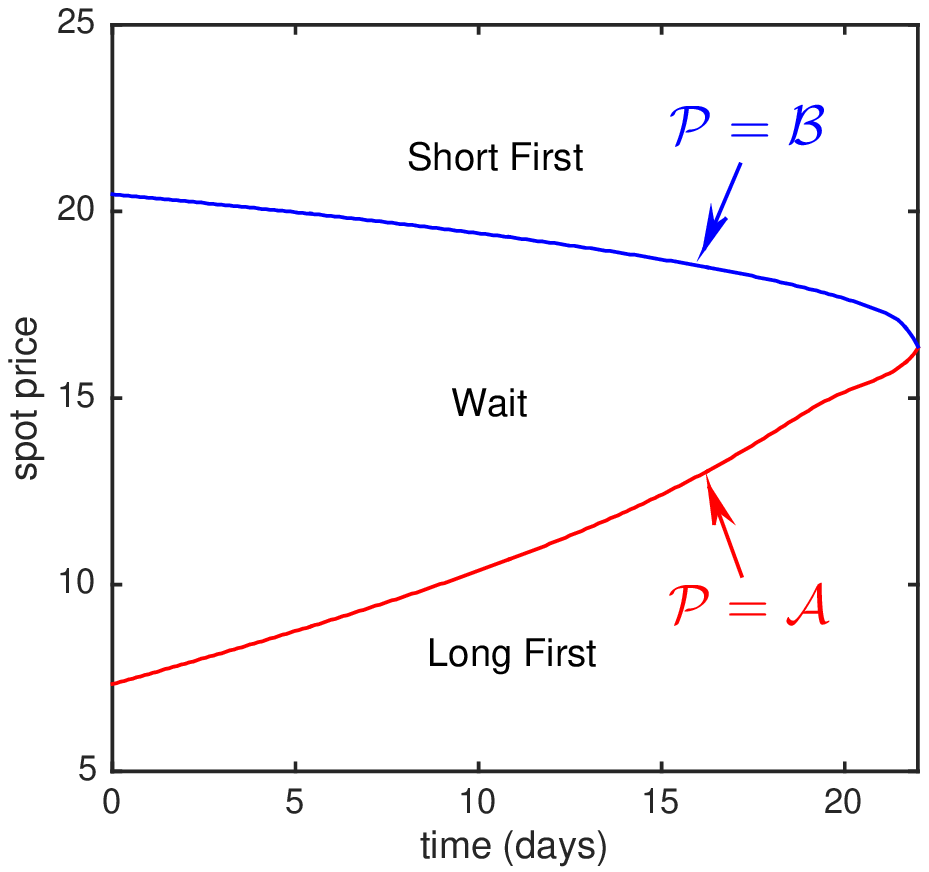}\label{CIRoptimalbdy0}}
\caption{Optimal long-short boundaries with/without transaction costs for futures trading under the CIR model in (a) and (b) respectively, optimal short-long boundaries with/without transaction costs in (c) and (d) respectively, and optimal boundaries with/without transaction costs in (e) and (f) respectively. Parameters: $\hat{T}=\frac{22}{252}$ , $T=\frac{66}{252}$, $ r=0.05, \sigma=5.33, \theta= 17.58,\tilde{\theta}=18.16,\mu=8.57,\tilde{\mu}=4.55, c=\hat{c}=0.005.$ }\label{CIRfigures}
\end{figure}

\clearpage

\begin{figure}[h]
\centering
\subfigure[]{ \includegraphics[width=4in]{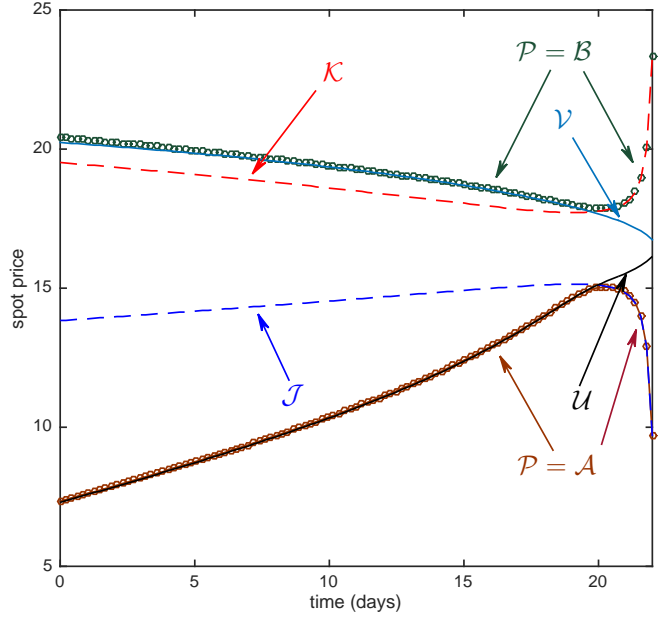}\label{CIRoptimalALL}}
\subfigure[]{ \includegraphics[width=4in]{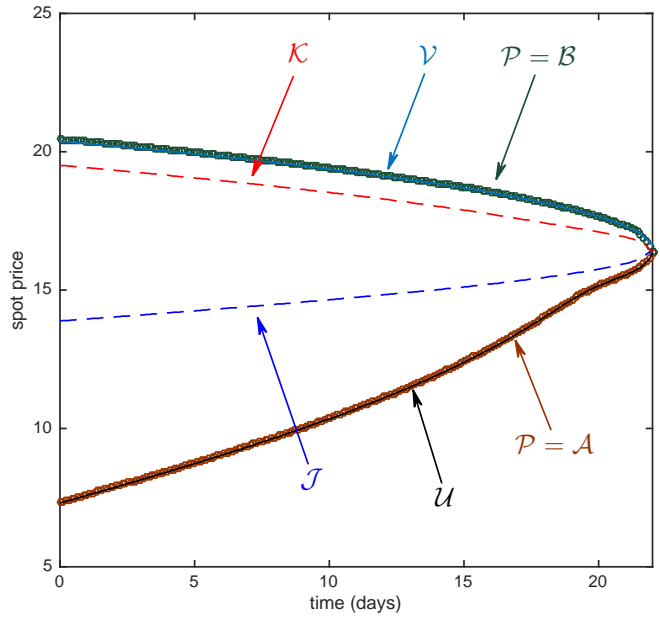}\label{CIRoptimalALL0}}
\caption{Optimal boundaries with and without transaction costs for futures trading under the CIR model in (a) and (b) respectively. Parameters: $\hat{T}=\frac{22}{252}$, $T=\frac{66}{252}$, $ r=0.05, \sigma=5.33, \theta= 17.58,\tilde{\theta}=18.16,\mu=8.57,\tilde{\mu}=4.55, c=\hat{c}=0.005. $}\label{CIRall}
\end{figure} 
 
 \clearpage

\begin{figure}[h]
   \centering
\subfigure[]{ \includegraphics[width=4in]{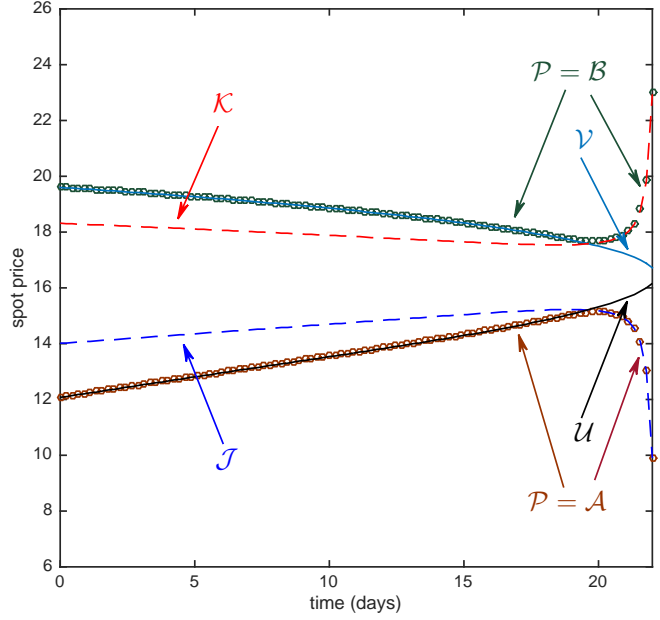}\label{OUoptimalALL}}
\subfigure[]{ \includegraphics[width=4in]{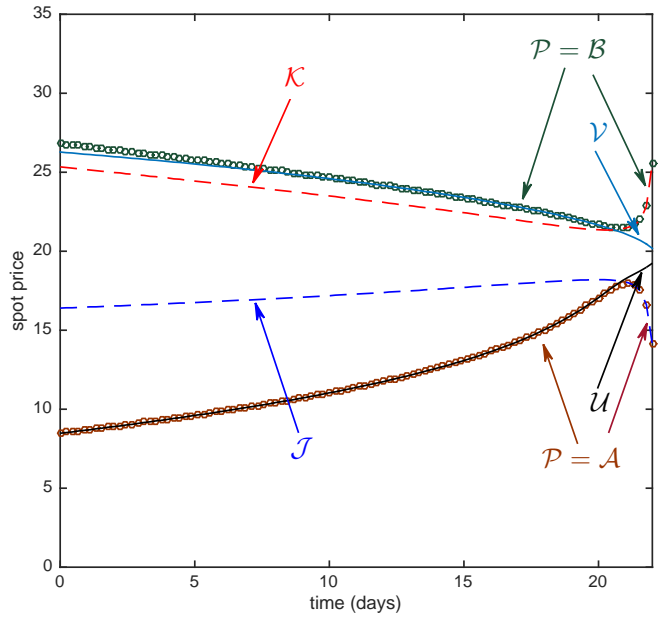}\label{XOUoptimalALL}}
\caption{Optimal boundaries with transaction costs for futures trading. (a) OU spot model with $\sigma=18.7, \theta= 17.58,\tilde{\theta}=18.16,\mu=8.57,\tilde{\mu}=4.55$. (b) XOU spot model with $\sigma=1.63, \theta= 3.03,\tilde{\theta}=3.06, \mu=8.57, \tilde{\mu}=4.08$. Common parameters: $\hat{T}=\frac{22}{252}$, $T=\frac{66}{252}$, $ c=\hat{c}=0.005.$}\label{OU_XOU_P}
\end{figure} 

  \clearpage

 \begin{figure}[th]
   \centering
     \includegraphics[width=4in]{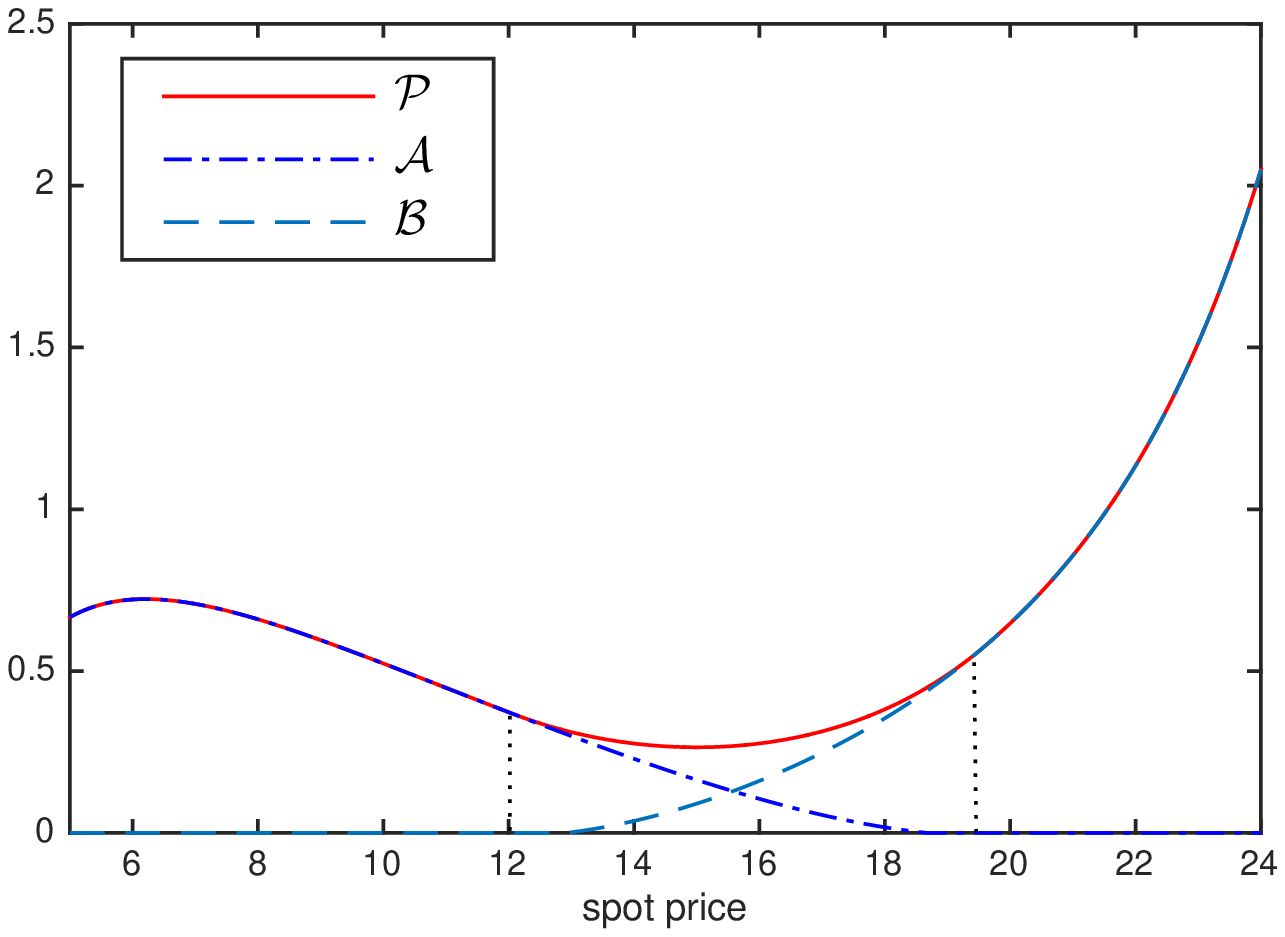}
     \caption{The value fuctions $\PP$, $\AA$, and $\BB$ plotted against the spot price at time $0$. The parameters are the same as those in Figure \ref{OU_XOU_P}.}\label{OUPAB}
\end{figure}

The investor's exit strategy depends on the initial  entry position. If the investor enters by taking a long position (at  the ``$\mathcal{P} = \AA$'' boundary), then the optimal exit timing to close her position is represented by the upper boundary with label  ``$\VV$'' in Figure \ref{CIRVVJJ}.  If the investor's initial position is short, then the optimal time to close by going long the futures is described by the lower boundary with label ``$\UU$'' in Figure \ref{CIRKKUU}.  

  Since the value function $\PP$ dominates both  $\JJ$ and $\KK$ due to the additional flexibility, it is not surprising that the  $``\PP = \AA$'' boundary  is lower than the ``$\JJ$'' boundary, and the  ``$\PP = \BB$'' boundary  is higher than the ``$\KK$'' boundary, as seen in Figure \ref{CIRoptimalALL}. This means that the embedded timing option to choose between the two strategies (``long to open, short to close"  or ``short to open, long to close") induces the  investor to delay market  entry to wait for better prices.   This phenomenon is also observed for both OU and XOU spot models in Figure \ref{OU_XOU_P}. Figure \ref{OUPAB} shows that the value function $\PP$ dominates $\BB$ and $\AA$ for all values of spot price. We can also see the regions where the ``$\PP = \AA$'' (when the spot price is low) and ``$\PP = \BB$'' (when the spot price is high).

  We see that Figure \ref{XOUoptimalALL} is similar to Figure \ref{CIRoptimalALL}, in both CIR and XOU cases, the difference between  ``$\UU$" boundary  and  the ``$\JJ$'' boundary is much larger than the difference between the  ``$\VV$'' boundary  and the ``$\KK$'' boundary. This means that the decision to choose  either   \textit{long-short} or \textit{short-long} has a larger impact on the optimal price level to long futures compared to the optimal level to short.  On the other hand, in Figure \ref{OUoptimalALL}, we observe a more symmetric relationship between the \textit{long-short} and \textit{short-long} optimal exercise boundaries. In particular, choosing one strategy or the other does not affect the optimal price levels as much as CIR and XOU cases.

 \section{Conclusion}
We have studied an optimal double stopping approach for trading futures under a number of mean-reverting spot models.  Our model yields trading decisions that are consistent with  the spot price dynamics and futures term structure. Accounting for the   timing options as well as  the option to choose between a long or short position, we find that it is optimal to  delay market entry, as compared to the case of committing to either  go long or short \emph{a priori}.

\clearpage
A natural direction for future research is to investigate the trading strategies under a multi-factor or time-varying mean-reverting spot price model. To this end, we include here some references that discuss the pricing aspect of futures under  such models, for example,  \cite{detempleVIX2000,LuZhuVixfutures2009,songVIX2012,futuresVIX} for VIX futures,    \cite{schwartz1997stochastic,CIRCommodity} for commodities, and  \cite{monoyiosfutures2002} for equity index futures.  It is also of practical interest to develop  similar  optimal multiple stopping approaches to trading commodities under mean-reverting spot models (\cite{LeungLiWang2014XOU, LeungLiWang2014CIR}),  and credit derivatives trading (\cite{LeungLiu2012}).
 
\section{Appendix}
\subsection{Numerical Implementation}
We apply a finite difference method to compute  the optimal boundaries in Figures \ref{CIRfigures}, \ref{CIRall} and  \ref{OU_XOU_P}. The   operators $\L^{(i)}$,  $i\in\{1,2,3\}$, defined in  \eqref{l1}-\eqref{l3} correspond to  the  OU, CIR, and  XOU models, respectively. To capture these models, we define the generic differential operator
\begin{align*}
 \L\{\cdot \}:= -r \cdot + \frac{\partial\cdot}{\partial t} + \varphi(s) \frac{\partial\cdot}{\partial s} + \frac{\sigma^2(s)}{2}\frac{\partial^2 \cdot}{\partial s^2},
\end{align*}
then the variational inequalities \eqref{VIV}, \eqref{VIJ}, \eqref{VIU}, \eqref{VIK} and \eqref{VIP} admit the same form as the following variational inequality problem:
\begin{align}
\begin{cases} 
\begin{split}
\L g(t,s) \leq 0, \enspace g(t,s) & \geq \xi (t,s),  \quad (t,s) \in [0,\hat{T}) \times \R_+,  \\ 
\\ (\L g(t,s)) (\xi (t,s) - g(t,s)) &= 0,  \quad (t,s) \in [0,\hat{T}) \times \R_+,\\
\\ g (\hat{T},s) &= \xi (\hat{T},s),  \quad s \in \R_+.
\end{split}	
\end{cases} \label{VIg}				
\end{align}
Here, $g(t,s)$   represents the value functions $\VV(t,s)$, $\JJ(t,s)$, $-\UU(t,s)$, $\KK(t,s)$, or $\PP(t,s)$. The function $\xi(t,s)$ represents $f(t,s;T) - c$, $(\VV(t,s) - (f(t,s;T) + \hat{c}))^+$, $-(f(t,s;T) + \hat{c})$, $(f(t,s;T) - c) - \UU(t,s))^+$, or $\max\{ \AA(t,s), \BB(t,s) \}$. The futures price $f(t,s;T)$, with $\hat{T} \le T$, is given by  \eqref{fTOU}, \eqref{fTCIR}, and \eqref{fTXOU} under  the OU, CIR, and XOU models, respectively.

We now consider the discretization of the partial differential equation $ \L g(t,s) =0$, over an uniform grid with discretizations in   time ($\delta t = \frac{\hat{T}}{N}$), and space    ($\delta s = \frac{S{\max}}{M}$).  We apply  the Crank-Nicolson method, which involves the finite difference equation:
\begin{align*}
-\alpha_i g_{i-1,j-1} + (1-\beta_i) g_{i,j-1} - \gamma_i g_{i+1,j-1}=\alpha_i g_{i-1,j} + (1+\beta_i) g_{i,j} + \gamma_i g_{i+1,j} ,
\end{align*}
where   
\begin{align}\notag
g_{i,j} &= g(j \delta t, i \delta s ), \quad  \xi_{i,j} = \xi(j \delta t, i \delta s ), \quad \varphi _i = \varphi (i \delta s), \quad   \sigma _i = \sigma (i \delta s). \\
\alpha_i &= \frac{\delta t}{4 \delta s}\big( \frac{\sigma ^2 _i }{\delta s} - \varphi _i \big),
\quad \beta_i = -\frac{\delta t}{2} \big(r + \frac{\sigma ^2 _i}{(\delta s)^2}\big), \quad 
\gamma_i = \frac{\delta t}{4 \delta s}\big( \frac{\sigma ^2 _i }{\delta s} + \varphi _i \big),\notag
\end{align}
for $i=1,2,...,M-1$ and $j=1,2,...,N-1$. The system to be solved  backward in time is 
\begin{align*}
\mathbf{M_1 g_{j-1}=r_j},
\end{align*}
where the right-hand side is 
\begin{align*}
\mathbf{r_j=M_2 g_{j}}+\alpha_1 \begin{bmatrix} g_{0,j-1}+g_{0,j} \\ 0 \\ \vdots \\0 \end{bmatrix} + \gamma_{M-1} \begin{bmatrix}  0 \\ \vdots \\0 \\ g_{M,j-1}+g_{M,j}, \end{bmatrix},
\end{align*}
and
\begin{align*}
\mathbf{M_1} &= \left[ \begin{array}{cccccc}
1- \beta _1 & -\gamma_1 & & & \\
-\alpha _2 & 1- \beta _2 & -\gamma_2 & & \\
& -\alpha _3 & 1- \beta _3 & -\gamma_3 & \\
& & \ddots & \ddots & \ddots \\
& & &- \alpha_{M-2} & 1- \beta _{M-2} & -\gamma_{M-2} \\
& & & & - \alpha_{M-1} & 1- \beta _{M-1} \end{array} \right],\\
\mathbf{M_2} &= \left[ \begin{array}{cccccc}
1+ \beta _1 & \gamma_1 & & & \\
\alpha _2 & 1+ \beta _2 & \gamma_2 & & \\
& \alpha _3 & 1+ \beta _3 & \gamma_3 & \\
& & \ddots & \ddots & \ddots \\
& & & \alpha_{M-2} & 1+ \beta _{M-2} & \gamma_{M-2} \\
& & & & \alpha_{M-1} & 1+ \beta _{M-1} \end{array} \right],\\
\mathbf{g_j} &=\begin{bmatrix} g_{1,j}, g_{2,j}, \hdots , g_{M-1,j} \end{bmatrix} ^T.
\end{align*}
 This  leads to a sequence of stationary complementarity problems. Hence, at each time step $j \in \left\{1, 2, \hdots, N-1\right\}$,  we need to solve 
\begin{align*}
\begin{cases} 
\begin{split}
\mathbf{M_1 g_{j-1}} & \geq \mathbf{r_j}, \\
\\ \mathbf{g_{j-1}} & \geq \boldsymbol{\xi _{j-1}},   \\ 
\\ (\mathbf{M_1 g_{j-1}} -\mathbf{r_j})^T (\boldsymbol{\xi _{j-1}} - \mathbf{g_{j-1}}) &= 0.  
\end{split}	
\end{cases} 			
\end{align*}
To solve the optimal problem, our algorithm enforces the constraint explicitly as follows
\begin{align}
g_{i,j-1}^{new}=\max \big\{g_{i,j-1}^{old},\xi_{i,j-1}\big\}.
\label{iterative}
\end{align}
The   projected SOR method is used to solve the linear system.\footnote{For a detailed discussion on the projected SOR method, we refer to \cite{wilmottbook1995}.}  At each time  $j$, we iteratively solve
\begin{align} 
\begin{split}
g_{1,j-1}^{(k+1)} &= \max \big\{\xi _{1,j-1} \,,\, g_{1,j-1}^{(k)} + \frac{\omega}{1-\beta_1} [r_{1,j}-(1-\beta_1) g_{1,j-1}^{(k)}+\gamma_1 g_{2,j-1}^{(k)}] \big\},\\
g_{2,j-1}^{(k+1)} &= \max \big\{\xi _{2,j-1} \,,\, g_{2,j-1}^{(k)} + \frac{\omega}{1-\beta_2} [r_{2,j}+\alpha_2 g_{1,j-1}^{(k+1)}-(1-\beta_2) g_{2,j-1}^{(k)}+\gamma_2 g_{3,j-1}^{(k)}] \big\},\\
\vdots\\
g_{M-1,j-1}^{(k+1)} &= \max \big\{\xi _{M-1,j-1} \,,\, g_{M-1,j-1}^{(k)} \\
&+ \frac{\omega}{1-\beta_{M-1}} [r_{M-1,j}+\alpha_{M-1} g_{M-2,j-1}^{(k+1)}-(1-\beta_{M-1}) g_{M-1,j-1}^{(k)}] \big\},
\end{split}
\label{PSOR}					
\end{align}
where $k$ is the iteration counter and $\omega$ is the overrelaxation parameter.
The iterative scheme starts from an initial point $\mathbf{g}_j ^{(0)}$ and proceeds until a convergence criterion is met, such as
$|| \mathbf{g}_{j-1} ^{(k+1)} - \mathbf{g}_{j-1} ^{(k)} || < \epsilon ,$ where $\epsilon$ is a tolerance parameter.  The  optimal boundary $S_f(t)$ can be identified by  locating the boundary that separates the  regions where  $g(t,s)=\xi(t,s)$, or  $g(t,s) \ge \xi(t,s)$.

\bibliographystyle{apa}    
 \bibliography{mybib2_10222015}    
 
\end{document}